\def\dOi{10(2:16)2014}
\keywords{pool resolution, conflict driven clause learning, Stone tautologies}
\theoremstyle{plain}
\newcommand{\rddots}{{\mathinner{\mkern1mu\raise1pt\vbox{\kern7pt\hbox{.}}%
        \mkern2mu\raise4pt\hbox{.}\mkern2mu\raise7pt\hbox{.}\mkern1mu}}}
\newcommand{\proofdots}{{\ddots\vdots\,\rddots}}
\def\Stone{{\hbox{\it Stone}}}
\def\negp{{\overline p}}
\def\negr{{\overline r}}
\def\negx{{\overline x}}
\def\Tpool{{\mathrm{pool}}}
\def\Tdom{{\mathrm{dom}}}
\def\Tmaxdom{{\mathrm{max\,dom}}}
\def\pprime{{\prime\prime}}
\title{Small Stone in Pool}
\author[S.~Buss]{Samuel R. Buss\rsuper a}	%required
\address{{\lsuper a}Department of Mathematics \\
University of California, San Diego\\
La Jolla, CA 92093-0112, USA}	%required
\email{sbuss@math.ucsd.edu}  %optional
\thanks{Supported in part by NSF grants DMS-1101228 and CCR-1213151.}
\author[L.~Ko{\l}odziejczyk]{Leszek Aleksander Ko{\l}odziejczyk\rsuper
b}
\address{{\lsuper b}Institute of Mathematics \\
University of Warsaw \\
Banacha 2, 02-097 Warszawa, Poland}	%optional
\email{lak@mimuw.edu.pl}  %optional
\thanks{{\lsuper{a,b}}This
work was carried out while
the second author was visiting
the University of California, San Diego,
supported by Polish Ministry of Science and
Higher Education programme ``Mobilno\'s\'c Plus''
with additional support from
a grant from the Simons Foundation (\#208717 to Sam Buss).}	%optional
\begin{document}

\begin{abstract}
\noindent The Stone tautologies are known to have polynomial size
resolution refutations and require exponential size regular refutations.
We prove that the Stone tautologies
also have polynomial size proofs in both pool
resolution and the proof system
of regular tree-like resolution with
input lemmas (regRTI).  Therefore, the Stone tautologies
do not separate resolution from DPLL with clause learning.
\end{abstract}

\maketitle

\hfill{\it I have said it thrice; What I tell you three times is true.} \\
\hbox{\relax}\hfill {\sl Fit the First --- The Landing; The Hunting of the Snark} \\
\hbox{\relax}\hfill Lewis Carroll \\

\section{Introduction}\label{sec:intro}

The Davis-Putnam-Logemann-Loveland (DPLL) proof search
method~\cite{DLL:theoremproving,DavisPutnam:procedure},
augmented with clause learning~\cite{SilvaSakallah:GRASP},
has become a core method for solving the satisfiability
(SAT) problem, especially for large-scale instances of
SAT that arise in industrial applications.  However, when
restarts are not allowed, the
proof strength of DPLL with clause learning relative to full resolution remains
unknown.  On one hand, if $\Gamma$ is a set of
clauses, and DPLL with clause learning can show
that $\Gamma$ is unsatisfiable in
$n$ steps, then $\Gamma$ has a resolution refutation
with size polynomially bounded by~$n$ (see \cite{BKS:clauselearning}).
On the other hand, the results
of \cite{AJPU:regularresolution,%
Urquhart:regularresolution,%
BonetBuss:poolVsRegularSAT,%
BBJ:poolVsRegular2}
imply that the length of DPLL with clause learning
proof searches can be nearly exponentially smaller
than the length of the shortest regular resolution
proofs. Systems designed to correspond to
DPLL with clause learning,
such as pool resolution (\cite{VanGelder:PoolResolution})
and regRTI  (\cite{BHJ:ResTreeLearning}),
are therefore simulated by resolution and strictly stronger
than regular resolution.
Determining the exact strength of those systems
is an open problem.

The two papers \cite{AJPU:regularresolution,%
Urquhart:regularresolution} gave examples of three
principles which have polynomial size resolution
refutations, but require exponential size
regular refutations.  In the terminology
of~\cite{BonetBuss:poolVsRegularSAT,%
BBJ:poolVsRegular2}, these three principles were
(1)~the guarded graph tautologies,
(2)~the Stone tautologies, and
(3)~the guarded pebbling tautologies.
Subsequently, \cite{BonetBuss:poolVsRegularSAT,%
BBJ:poolVsRegular2} showed that the guarded graph
tautologies and the guarded pebbling tautologies
have polynomial size pool and regRTI refutations,
and hence can be refuted by polynomially long
DPLL search with clause learning and without
restarts.

It remained open whether the same holds for the
Stone tautologies.  There seems to be
an inherent simplicity in the irregularity
introduced by ``guarded'' versions of
combinatorial principles, such as~(1) and~(3).
This is because
the guarded principles have refutations in which all
irregularities are
at the initial inferences; namely,
the resolution refutation can start by using
resolution to remove the
guard literals, and then give
a (regular) refutation of the
underlying principle as usual.  In contrast, the
prior known resolution refutations for the Stone principles
of~\cite{AJPU:regularresolution}
use irregularity in a more essential fashion, with
the irregularities distributed throughout the
refutation.  Because of this, it was conjectured that
the Stone principles might be examples where
the pool and regRTI systems, and thus DPLL with clause learning and
no restarts,
require exponential size refutations.  That is,
the Stone tautologies were viewed as
candidates for separating DPLL with
clause learning from resolution.

The present paper, however, refutes this conjecture
and establishes that the Stone tautologies do indeed have polynomial size
pool refutations and regRTI refutations. In light of this,
the possibility that pool and regRTI actually simulate the full
power of resolution perhaps becomes slightly more plausible.
Nevertheless, even if such a general simulation result does hold,
it is far from clear whether the methods  we use to deal with the Stone tautologies
can be of much help in proving it.

The remainder of this introduction gives a review
of the basic definitions,
first of the Stone principles, then of the various proof systems.
It concludes by stating our main theorems about the existence
of pool and regRTI refutations.  The reader is encouraged to
consult the introductory sections
of~\cite{BonetBuss:poolVsRegularSAT,BBJ:poolVsRegular2} for
a more extensive discussion of
prior work, and to consult \cite{AJPU:regularresolution} for more on the
Stone principles.  A good general introduction to
DPLL with clause learning is~\cite{BKS:clauselearning}.

\begin{defi}
A {\em literal} is either a propositional variable~$x$
or a negated variable~$\negx$.  A {\em clause} is a set
of literals, usually written as a list of literals separated
by either $\lor$'s (disjunctions) or commas.
A clause is interpreted as the disjunction of its members.
A set~$\Gamma$ of clauses is interpreted as the conjunction
of its members, so $\Gamma$~represents a propositional
formula in conjunctive normal form.
\end{defi}

The next definition describes the Stone
principle of~\cite{AJPU:regularresolution} as a set of clauses.
The Stone principle is a kind of induction principle.
For a given directed acyclic graph (dag), it
states that if each source vertex is pebbled with a red stone
and if each vertex
whose immediate predecessors are pebbled with red stones
is also pebbled with a red stone, then the sink vertex is
pebbled with a red stone.

\begin{defi}
Assume that $G=(V,E)$ is a dag with
a single sink, with vertices $V=\{1,\ldots,N\}$, such that
each non-source vertex of~$G$ has in-degree~2. We assume that
vertices are numbered consistently with the directions
of the edges of~$G$ so that if
there is an edge $(i',i)\in E$ from $i'$ to~$i$ then $i'>i$,
and so that the source nodes
of~$G$ are exactly vertices $n{+}1,n{+}2,\ldots,N$ for some~$n$.
Vertex~$1$ is the sole
sink of~$G$.
Further assume that $m\ge N$;
here, $m$~is the number of ``stones''.
The
(negation of the) Stone tautology for $G$ and~$m$
is denoted $\Stone(G,m)$ and uses variables~$p_{i,j}$ to
indicate that vertex $i\in V$ is marked (``pebbled'') with
the $j$-th stone and variables~$r_j$ to indicate that
the $j$-th stone is colored red.
$\Stone(G,m)$
contains the following clauses:
\begin{itemize}
\item $\bigvee_{j=1}^m p_{i,j}$, for each vertex $i$ in $G$.
(Each vertex is pebbled by at least one
stone.)
\item $\negp_{i,j} \lor r_j$, for each $j = 1,\ldots, m$,
and each source vertex $i$ in $G$.
(Each stone on a source vertex
is colored red.)
\item $\negp_{1,j}\lor \negr_j$, for each $j = 1,\ldots, m$.
(The sink vertex~$1$ is not pebbled by
any red stone.)
\item
$\negp_{i',j'}\lor \negr_{j'}\lor \negp_{i'',j''}\lor \negr_{j''}\lor \negp_{i,j}\lor r_j$,
whenever $i'$ and $i''$ are the two vertices
such that $(i',i) \in E$ and $(i'',i) \in E$ and $j\notin\{j',j''\}$.
(If the two predecessors of~$i$ in~$G$
are pebbled by red stones, then every stone pebbling vertex~$i$ is
also red.
These ``induction clauses'' are equivalent to
$p_{i',j'} \wedge r_{j'} \wedge p_{i'',j''} \wedge r_{j''} \wedge p_{i,j} \rightarrow r_j$.)
\end{itemize}
\end{defi}
\noindent It is permitted that vertices are pebbled with more than one
stone; likewise, the same stone may pebble multiple vertices.

The Stone clauses are clearly inconsistent since if the source vertices
are pebbled with red stones then the induction clauses imply
that all other vertices are also pebbled with  red stones,
and this contradicts the third group of clauses asserting
that the sink vertex is not pebbled with a red stone.

\medskip

\noindent
We next recall the definitions of various types of resolution.

\begin{defi}
Let $A$, $B$, and $C$ be clauses, and $x$ a
literal such that $\negx \notin A$ and $x \notin B$. Consider the inference
\begin{prooftree}
\AxiomC{A} \AxiomC{B}
\BinaryInfC{C}
\end{prooftree}
The literal~$x$ is the {\em resolution variable}.  Three kinds
of inferences are defined by:
\begin{description}
\item[Resolution rule]  We have $x \in A$, $\negx \in B$, and
$C=(A\setminus\{x\})\lor(B\setminus\{\overline{x}\})$.
\item[Degenerate resolution rule] \cite{HBPvG:clauselearn,VanGelder:PoolResolution}
If $x\in A$ and $\overline{x}\in B$,
then $C$ is obtained as in the resolution rule.
If $x\in A$ and $\negx\notin B$,
then $C$ is~$B$.
If $x\notin A$ and $\negx \in B$,
then $C$ is~$A$.
Otherwise $C$ is one of $A$ or~$B$.
\item[w-resolution rule] \cite{BHJ:ResTreeLearning}
The clause $C$ equals
$(A\setminus\{x\})\lor(B\setminus\{\overline{x}\})$.
\end{description}
\end{defi}
The three different types of resolution
coincide when $x\in A$ and $\negx\in B$, in which case
we refer to the inference as \emph{non-degenerate}.

\begin{defi}
A {\em resolution derivation}~$\mathcal D$ of a clause~$C$
from a set~$\Gamma$ of clauses
is a sequence of clauses $C_1,\ldots,C_s{=}C$
and such that each $C_i$ is either a clause from~$\Gamma$ or
is inferred by a resolution rule from two previous clauses.
If $C$ is the empty clause, $\mathcal D$ is
a {\em resolution refutation} of~$\Gamma$.
{\em Degenerate resolution} and
{\em w-resolution} derivations and refutations
are defined similarly.

The \emph{size} of a refutation
$C_1,\ldots,C_s{=}\bot$ is defined to be $s$.
\end{defi}
Derivations are typically viewed as directed acyclic graphs.
A derivation is {\em tree-like} provided its dag is a tree.
It is well known that (tree-like) resolution is sound and complete, in
that $\Gamma$ has a refutation iff it is unsatisfiable.
\begin{defi}
A refutation~$\mathcal D$ is {\em regular} provided that
no variable is used as a resolution variable
more than once
along any path in the directed acyclic graph of~$\mathcal D$.
A derivation~$\mathcal D$ of a clause~$C$ is {\em regular} provided
that, in addition, no variable appearing in~$C$ is used as
a resolution variable in~$\mathcal D$.
\end{defi}

We next define
``regular resolution derivation trees with lemmas'', or ``regRTL'',
following~\cite{BHJ:ResTreeLearning}.
The idea is that a dag-like proof can by rewritten
as a tree-like proof in which
clauses obtained earlier in the proof
can be used freely as ``learned'' lemmas.
This will be the key component in defining
Van Gelder's notion of pool proofs.

\begin{defi}
Given a tree $T$, the {\em postorder}
ordering $<_T$ of the nodes is defined as follows:
if $u,v,w$ are distinct nodes of~$T$,
$v$~is a node in the subtree rooted at the left child of~$u$,
and $w$~is a node in the subtree rooted at the right child
of~$u$, then $v<_T w<_T u$.  The {\em preorder} ordering $<^\prime_T$
is defined similarly, but stipulates that
$u<^\prime_T v<^\prime_T w$.
\end{defi}

\begin{defi}
A {\em regRTL derivation}~\cite{BHJ:ResTreeLearning}
of a clause~$C$
from a set of initial clauses~$\Gamma$
is a tree-like resolution derivation~$T$
that fulfills the following conditions:
(a)~each leaf is labeled with either a clause of~$\Gamma$ or a clause
(called a ``lemma'')
that appears earlier in~$T$ in the $<_T$ ordering;
(b)~each internal node is labeled with a clause and a literal,
and the clause is obtained by resolution
from the clauses labeling the node's children
by resolving on the given literal;
(c)~the proof tree is regular;
(d)~the root is labeled with~$C$.
If the labeling of the root is the empty
clause, $T$ is a {\em regRTL refutation}.

A {\em regWRTL derivation}~\cite{BHJ:ResTreeLearning}
is defined
similarly, but allowing
$w$-resolution inferences instead of just resolution
inferences.

A {\em pool resolution derivation}~\cite{VanGelder:PoolResolution}
is also defined
similarly, but allowing degenerate resolution inferences.
\end{defi}

\begin{prop}
\label{prop:regWRTLandPool}
If $\Gamma$ has a regWRTL refutation~$R$, then $\Gamma$
has a pool resolution refutation~$R^\prime$ with
the size of~$R^\prime$ no greater than the size of~$R$.
\end{prop}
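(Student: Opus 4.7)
The plan is to build $R'$ by copying the tree of $R$ node-for-node, keeping the same labeled resolution literal at each internal node and the same lemma pointers at each leaf, and only replacing each $w$-resolution inference by an appropriate degenerate resolution inference. The key claim I would prove by induction in postorder $<_T$ is that the clause $D'_v$ attached to each node~$v$ in $R'$ is a subclause of the clause $D_v$ at~$v$ in~$R$; in particular, the root of $R'$ is again the empty clause.

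For the inductive step at internal nodes, the crucial observation is that if $A' \subseteq A$, $B' \subseteq B$ (with $\overline{x} \notin A$ and $x \notin B$), and $w$-resolution on~$x$ produces $C = (A \setminus \{x\}) \cup (B \setminus \{\overline{x}\})$, then some degenerate resolution of $A'$ and $B'$ on~$x$ produces a clause $C' \subseteq C$. I would prove this by a short case split on whether $x \in A'$ and whether $\overline{x} \in B'$. If both hold, the result is $(A' \setminus \{x\}) \cup (B' \setminus \{\overline{x}\}) \subseteq C$, exactly as in ordinary resolution. If $x \in A'$ but $\overline{x} \notin B'$, the degenerate rule lets us take $C' = B'$; since $\overline{x} \notin B'$ we have $B' \subseteq B \setminus \{\overline{x}\} \subseteq C$. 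The case $x \notin A'$, $\overline{x} \in B'$ is symmetric. If neither literal is present, we may take $C' = A'$, which lies in $A \setminus \{x\} \subseteq C$ because $\overline{x} \notin A$ already.

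For leaves, an original-clause leaf is unchanged, while a lemma leaf pointing to an earlier node~$u$ is relabeled with $D'_u$; by induction $D'_u \subseteq D_u$ is a legitimate clause appearing earlier in $R'$ in the $<_T$ ordering, which is exactly what a pool-resolution lemma requires. Regularity carries over for free, since the sequence of labeled resolution variables along every root-to-leaf path is identical to that in $R$, and size is preserved because the underlying tree is unchanged. The root clause $D'_{\mathrm{root}}$ is a subclause of the empty clause, hence empty, so $R'$ is a pool resolution refutation of the required size.

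Since the transformation is entirely local and the underlying combinatorics is simple, the only nontrivial point is the case analysis above; the rest is routine bookkeeping. I expect the argument to be short and to need no additional structural assumptions on~$R$.
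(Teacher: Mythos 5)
Your proposal is correct and follows essentially the same route as the paper: an induction mapping each clause $C$ of~$R$ to a subclause $C'\subseteq C$ in~$R'$, with degenerate resolution on the same labeled literal simulating each $w$-resolution step, and the leaf/lemma, regularity, and size bookkeeping handled exactly as the paper intends. (One cosmetic slip: in your final case the containment $A'\subseteq A\setminus\{x\}$ follows from $x\notin A'$, the case hypothesis, rather than from $\overline{x}\notin A$; this does not affect the argument.)
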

The proof of Proposition~\ref{prop:regWRTLandPool}
is simple.  Each clause~$C$ in~$R$ corresponds to
a clause $C^\prime$ in~$R^\prime$ with $C^\prime\subseteq C$.
Arguing inductively, suppose that $C$ is derived in~$R$ from
the clauses $C_1$ and~$C_2$ using resolution literal~$x$.
Then, it is straightforward
to define $C^\prime$ from $C_1^\prime$ and $C_2^\prime$ as the unique
clause that can be inferred by degenerate resolution
from $C_1^\prime$ and $C_2^\prime$ with respect to~$x$.
\hfill \qed

The strategy of proving the existence of short
pool refutations via constructing short regWTRL refutations
is employed in the proof of Theorem~\ref{thm:pool} below.

\begin{defi} (\cite{BHJ:ResTreeLearning}).
A ``lemma'' in clause~(a) of the definition of regRTL derivations
is called an {\em input lemma} if it is derived by an {\em input}
subderivation, namely by a subderivation
in which each inference has at least one
hypothesis which is a member of~$\Gamma$ or a lemma.
A regRTI derivation is a regRTL derivation
which uses only input lemmas as lemmas.
\end{defi}

\noindent
A bit more generally, we say that a clause is ``learned''
provided it is available for use as a lemma
by virtue
of having been learned earlier in the postorder traversal
of the proof, or by
virtue of being an initial clause:

\begin{defi}
Suppose that $R$ is a regRTL (respectively, a regRTI)
refutation of~$\Gamma$,
and let $C$~be
a clause in~$R$.  The {\em learned clauses} of~$R$
at clause~$C$ are the clauses which are either
in~$\Gamma$ or which have been derived in~$R$
(respectively, have been derived by an input subderivation
in~$R$)
before~$C$ in the postordering of~$R$.
\end{defi}

Theorem~5.1 of~\cite{BHJ:ResTreeLearning} gives a
polynomial equivalence between
regRTI proofs and DPLL with clause learning without restarts.
This equivalence, however, uses non-greedy DPLL; namely,
the DPLL proof search may need to ignore contradictions
during its search.  Since most real-world DPLL search algorithms
do not ignore contradictions, and
use unit propagation whenever possible, is it natural to posit
similar properties for regRTI proofs.  These are formalized by
the next two definitions.

\begin{defi}
Let $C$ be a clause appearing in a regular derivation~$R$.
Following~\cite{BBJ:poolVsRegular2}, we write
$C^{\Tpool}$ to denote the clause containing the literals that
appear in any clause in the path from the
root of~$R$ up to and including~$C$.
\end{defi}
The clause $C^{\Tpool}$ is the same as what
\cite{BonetBuss:poolVsRegularSAT} calls~$C^+$.
The regularity
of~$R$ ensures that $C^{\Tpool}$ contains no contradictory literals.
\begin{defi}
Let $R$ be a tree-like refutation of~$\Gamma$.
A clause~$D$ in~$R$ is {\em prior-learned} for
a clause~$C$ in~$R$ if
either $D\in\Gamma$ or
there is an occurrence of~$D$ as a learned
clause which appears
in~$R$ before~$C$ in both postorder and preorder.
\end{defi}

The intuition for ``prior-learned'', is that, when reaching
the clause~$C$  while constructing~$R$
in left-to-right, depth-first order, the prior-learned
clauses are the clauses that are already available
to help derive~$C$.

\begin{defi} (See \cite{BonetBuss:poolVsRegularSAT}).
A refutation~$R$ is {\em greedy} provided that, for each clause~$C$
of~$R$, if $C$ or any subclause of~$C$~is prior-learned,
then $C$ itself is a prior-learned clause and is a leaf clause of~$R$.
A refutation~$R$ is {\em greedy and unit-propagating}
provided
that, for each clause~$C$ of~$R$, if
there is an input derivation
of some clause $C^\prime\subseteq C^\Tpool$
from the prior-learned clauses of~$R$ at~$C$
which does not resolve on any literal in~$C^\Tpool$,
then $C$~is derived in~$R$ by such a
derivation.
\end{defi}

We can now state our main results.

\begin{thm}\label{thm:pool}
The Stone principles $\Stone(G,m)$
have regWRTL refutations, and thus pool refutations, of size $O(Nm^3)$.
\end{thm}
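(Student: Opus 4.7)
The plan is to derive, for each vertex $v$ and each stone $j$, the ``red-stone'' clause $C_v^j = \negp_{v,j} \lor r_j$ expressing that if stone $j$ pebbles $v$ then $j$ is red. For a source vertex, $C_v^j$ is already an axiom of $\Stone(G,m)$. For a non-source $v$ with predecessors $i', i''$, we give a tree subderivation of $C_v^j$ of size $O(m^2)$ that uses the clauses $C_{i'}^{j'}, C_{i''}^{j''}$, for the various $j',j'' \in \{1,\ldots,m\}$, as lemma leaves. Processing vertices in topological order from the sources, all $Nm$ such subderivations are arranged in a single regWRTL tree so that each $C_v^j$ appears first as the root of its own subderivation, earlier in postorder than any use as a lemma. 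After every $C_v^j$ is available, the refutation concludes with an $O(m)$-size final phase at the sink: resolve each $C_1^j$ with the axiom $\negp_{1,j} \lor \negr_j$ to obtain $\negp_{1,j}$, then chain these singletons against $\bigvee_j p_{1,j}$ to reach $\bot$.

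Concretely, the subderivation of $C_v^j$ passes through four layers. First, for each $k',j'' \in \{1,\ldots,m\}\setminus\{j\}$, form $I''_{k',j''} = \negp_{i',k'} \lor \negp_{i'',j''} \lor \negp_{v,j} \lor r_j$ by resolving the corresponding induction axiom with $C_{i'}^{k'}$ on $r_{k'}$ and then with $C_{i''}^{j''}$ on $r_{j''}$; the degenerate case $k'=j''$ requires only the single resolution on $r_{j''}$ using $C_{i'}^{j''}$, since the induction axiom, as a set of literals, contains $\negr_{j''}$ just once. Second, for each $j'' \neq j$ derive $A_{j''} = p_{i',j} \lor \negp_{i'',j''} \lor \negp_{v,j} \lor r_j$ by starting from the axiom $\bigvee_k p_{i',k}$ and chain-resolving on $p_{i',k'}$ with $I''_{k',j''}$ for every $k' \neq j$, so that exactly the literal $p_{i',j}$ survives. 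Third, analogously derive $B_2 = p_{i',j} \lor p_{i'',j} \lor \negp_{v,j} \lor r_j$ by chaining $\bigvee_k p_{i'',k}$ against the $A_{j''}$'s. Finally, resolve $B_2$ with $C_{i'}^j$ on $p_{i',j}$ and then with $C_{i''}^j$ on $p_{i'',j}$ to obtain $C_v^j$. Each subderivation has size $O(m)\cdot O(m) = O(m^2)$; summed over the $Nm$ pairs $(v,j)$ this gives $O(Nm^3)$, since every lemma reuse costs a single tree leaf.

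The main delicate point will be regularity. Along any root-to-leaf path inside the subderivation of $C_v^j$, the resolution variables fall into three disjoint groups: $p_{i',\cdot}$-variables at distinct second indices (from one $A_{j''}$-chain plus the final step on $p_{i',j}$), the symmetric $p_{i'',\cdot}$-variables, and at most two $r$-variables $r_{k'}, r_{j''}$ inside a single $I''$-subtree---distinct when $k' \neq j''$ and collapsed to one when $k' = j''$. None of these equals $p_{v,j}$ or $r_j$, so the root clause's own variables are never resolved on. Because lemma uses are isolated leaves and distinct subderivations live in disjoint subtrees of the overall tree, regularity of the full refutation reduces to regularity of each subderivation. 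The hardest part I expect is handling the $k' = j''$ corner case cleanly together with checking that the two chains against $\bigvee_k p_{i',k}$ and $\bigvee_k p_{i'',k}$ leave precisely the intended surviving literals; arranging the $Nm$ subderivations into a single tree whose postorder respects the topological order of $G$ is then routine bookkeeping.
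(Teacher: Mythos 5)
Your per-clause subderivation of $\negp_{v,j}\lor r_j$ from the clauses $\negp_{i',j'}\lor r_{j'}$ and $\negp_{i'',j''}\lor r_{j''}$ is essentially the paper's Section~2 construction (Theorem~\ref{thm:LearnClause}), and your size and local regularity analysis of it is fine. The genuine gap is the global assembly, which you dismiss as ``routine bookkeeping'': a regWRTL refutation is a \emph{single} tree rooted at $\bot$, so each of the $Nm$ subderivations must be attached to that tree by actual resolution inferences, in a position that is earlier in postorder than every lemma use of its root clause, and the regularity condition is a condition on root-to-leaf paths of the \emph{whole} tree. Your claim that ``distinct subderivations live in disjoint subtrees, so regularity of the full refutation reduces to regularity of each subderivation'' is false: every subderivation sits above inferences leading down to $\bot$, and the variables resolved on that lower part of the path count. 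For instance, the only attachment points your sketch actually provides are the sink clauses $C_1^j$ in the final phase; if you try to make the other clauses $C_v^{j'}$ available by expanding their derivations in place at their first lemma occurrence, the path below such an occurrence has already resolved on $r_j$ (in the final phase) and possibly on another $r_{j''}$ (in your layer~1), while the expanded inner derivation resolves on $r_t$ for all $t\neq j'$, including $r_j$ --- an irregularity. So the scheme as described either leaves the non-sink subderivations disconnected from the root or violates regularity.

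This is exactly the difficulty the paper's proof is organized around: it builds a ``skeleton'' spine that resolves only on the variables $p_{i,j}$ (Figures~\ref{fig:skeleton} and~\ref{fig:skeletondetail}), hangs the learning subproof for each $\negp_{i,j},r_j$ off the $i$-th segment where the path below contains no $r$-variables and no $p_{i',j'}$ with $i'>i$, supplies auxiliary derivations (clauses (\ref{eq:isPredecessor})--(\ref{eq:iIsPredecessorA})) to keep the right-hand branches well-formed, and even then needs w-resolution inferences to create room for the stones $j=m-1,m$ in each segment --- which is why the theorem is stated for regWRTL rather than regRTL. Your proposal contains no counterpart to any of this, and the one structural claim offered in its place does not hold, so the main content of the theorem remains unproved.
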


\begin{thm}\label{thm:regRTI}
The Stone principles $\Stone(G,m)$
have regRTI refutations of size $O(N^3m^4)$.
\end{thm}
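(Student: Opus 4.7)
The plan is to build on the regWRTL construction from Theorem~\ref{thm:pool} and convert it into a regRTI refutation, absorbing a polynomial blowup to account for the stricter requirement that every lemma used be derivable by an input subderivation (one in which each inference has at least one hypothesis in $\Gamma$ or in a previously established input lemma). The key conceptual move is that, whereas regRTL and pool allow a lemma to be cited as soon as it appears anywhere earlier in the postorder traversal, regRTI forces us to produce each lemma by a linear-chain-like derivation that repeatedly hooks into $\Gamma$ or into earlier input lemmas.

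Concretely, I would process the vertices of $G$ in reverse topological order, from the sources $n{+}1,\ldots,N$ down to the sink $1$, and at each vertex~$i$ establish an input lemma $R_i$ encoding that $i$ is pebbled by a red stone. For a source vertex, $R_i$ is obtainable by a short input chain that resolves the axiom $\bigvee_j p_{i,j}$ successively against the source axioms $\negp_{i,j}\lor r_j$. For an internal vertex~$i$ with predecessors $i',i''$, the clause $R_i$ is produced by a linear chain starting from an induction axiom $\negp_{i',j'}\lor\negr_{j'}\lor\negp_{i'',j''}\lor\negr_{j''}\lor\negp_{i,j}\lor r_j$ and resolving it against the previously established input lemmas $R_{i'}$ and $R_{i''}$ and against the axiom $\bigvee_j p_{i,j}$. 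The refutation closes at the sink by combining $R_1$ with the axioms $\negp_{1,j}\lor\negr_j$ to derive the empty clause. Because each invocation of an $R_i$ higher up in the tree must be re-established by a fresh input subderivation rather than simply cited, the cost of expanding one lemma is roughly $O(Nm)$, with $O(Nm)$ lemmas assembled in an outer tree of shape dictated by $G$ and iterated over $O(m^2)$ choices of stones, giving the claimed $O(N^3m^4)$ bound.

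The main obstacle will be preserving regularity under the input-derivation constraint. In pool and regWRTL, the degenerate and $w$-resolution rules provide enough slack that a variable can effectively be ``absorbed'' without being resolved on, making it easier to avoid clashes along paths; regRTI offers no such flexibility, since every inference is a genuine resolution, and no variable may appear as a resolution literal twice along any root-to-leaf path. Because the variables $p_{i,j}$ and $r_j$ pervade the Stone axioms, the input subderivations of the $R_i$'s and the way they are glued together in the outer tree must be orchestrated so that, along every path, each variable is used at most once. I would handle this by fixing, for each lemma~$R_i$, a canonical order in which the stone and redness variables are eliminated, and by choosing subtree placements so that these orders remain globally compatible as the construction is composed from sources to sink. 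Verifying this bookkeeping is the step I expect to consume most of the technical work.
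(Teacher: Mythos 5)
Your plan founders on the central obstacle that the paper's Section on ``Learning and 3-Learning'' is built to overcome, and which your proposal does not address. First, the lemma $R_i$ ``encoding that $i$ is pebbled by a red stone'' is not a clause: the usable lemmas are the $m$ clauses $\negp_{i,j},r_j$ for each vertex~$i$ (resolving $\bigvee_j p_{i,j}$ against the source axioms only yields $r_1,\ldots,r_m$, which is too weak to drive the induction). Second, and fatally for your scheme, for an internal vertex~$i$ the clause $\negp_{i,j},r_j$ \emph{cannot} be obtained by any input derivation from the Stone clauses together with the predecessor lemmas $\negp_{i^\prime,j^\prime},r_{j^\prime}$ and $\negp_{i^\pprime,j^\pprime},r_{j^\pprime}$: since input derivability coincides with unit refutability, adding the unit clauses $p_{i,j}$ and $\negr_j$ to those clauses produces nothing by unit propagation, so no ``linear chain hooking into $\Gamma$ or earlier input lemmas'' can exist. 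Consequently your vertex-by-vertex pass from sources to sink never gets past its first internal vertex. The paper's way around this is to derive $\negp_{i,j},r_j$ three separate times inside the refutation: the first pass makes the intermediate clauses $\negp_{i^\prime,j^\prime},\negp_{i^\pprime,j^\pprime},\negp_{i,j},r_j$ into input lemmas, the second pass makes the clauses $\negp_{i^\pprime,j^\pprime},\negp_{i,j},r_j$ into input lemmas, and only the third pass learns $\negp_{i,j},r_j$ itself (the $1$-/$2$-/$3$-learned stratification). Your proposal contains no counterpart of this repetition mechanism, and it is not a bookkeeping detail that can be absorbed into a canonical variable-elimination order.

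Relatedly, the global organization you sketch does not produce a single regular tree whose postorder makes each lemma available where it is used. The paper does not graft lemma subproofs onto the regWRTL skeleton of Theorem~\ref{thm:pool}; it builds a different skeleton from the empty clause upward through ``well-formed unfinished clauses,'' using the bypasser/supports machinery and the sets $B$, $B^+$, $S^B_\ell(C)$ to place each learning subderivation where regularity permits, and it needs the hypothesis $m\ge N$ (a fresh stone must exist) to keep extending these clauses. The size bound then comes from a counting argument over the $m$-branching, depth-$n$ tree of unfinished leaves with at most $3nm$ ``green'' leaves, each handled at cost $O(Nm^2)$ — not from multiplying $O(Nm)$ lemmas by $O(Nm)$ expansion cost and $O(m^2)$ stone choices, which does not correspond to any step of a construction. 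To repair your argument you would need, at minimum, the three-pass learning idea and an explicit scheme for embedding the repeated derivations in one regular tree; at that point you are essentially reconstructing the paper's proof.
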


It follows from Theorem~\ref{thm:regRTI} and Theorem~5.1
of~\cite{BHJ:ResTreeLearning} that DPLL proof search with
clause learning and without restarts can refute the
Stone principle clauses in polynomial time.  It is possible
that the regRTI refutations of Theorem~\ref{thm:regRTI}
can be made greedy and unit-propagating, but we have
not tried to prove this.

The proofs of Theorems \ref{thm:pool} and~\ref{thm:regRTI}
are given in Sections \ref{sec:pool} and~\ref{sec:regrti}, respectively.
Section~\ref{sec:learning} first gives some preliminary
resolution derivations
that will be useful for both proofs.

Of course, Theorem~\ref{thm:regRTI}
implies Theorem~\ref{thm:pool} apart from the size bounds.  However,
it seems useful to prove the two theorems separately, since
the proof of Theorem~\ref{thm:pool} is substantially simpler than
the proof of Theorem~\ref{thm:regRTI}.

The intuition behind both proofs is similar.  The reason the Stone
tautologies seem highly irregular is that,
in the earlier refutations given by~\cite{AJPU:regularresolution},
some of the variables (the $r_j$'s)
are resolved on repeatedly during the refutation.  The intuition
is that the \hbox{regWRTL/regRTI} proofs for Theorems \ref{thm:pool} and~\ref{thm:regRTI}
can be
built in a bottom-up fashion starting from the empty clause,
by first resolving on the variables~$p_{i,j}$ that do
not cause irregularities, and saving the
problematic variables~$r_j$ to be resolved on later (higher in
the proof).  This is not quite completely true, since our
derivations do also resolve again on $p_{i,j}$'s at the top of
the derivations; it is nonetheless a useful intuition.

\section{Learning and 3-Learning}\label{sec:learning}

The regWRTL refutation for Theorem~\ref{thm:pool} and
the regRTI refutation for Theorem~\ref{thm:regRTI} both work
by learning the clauses $\negp_{i,j},r_j$.  If $i$ is a source
vertex of~$G$ then these clauses are Stone clauses, but
otherwise they must be learned.

Suppose that $i$ is a non-leaf vertex,
and $i^\prime$ and~$i^\pprime$ are the
two predecessors of~$i$ in~$G$.  In addition, suppose that every clause
$\negp_{i^\prime,j}, r_j$ and $\negp_{i^\pprime,j}, r_j$ has already been learned.
Fix a value of~$j$.  A derivation of $\negp_{i,j}, r_j$ proceeds in the
following three steps.

First, for each $j^\prime\not=j^\pprime$,
both distinct from~$j$,
derive the clause
\begin{equation}\label{eq:onelearnclause}
\negp_{i^\prime,j^\prime},\negp_{i^\pprime,j^\pprime},\negp_{i,j},r_j
\end{equation}
by resolving a Stone clause against the
two learned clauses $\negp_{i^\prime,j^\prime},r_{j^\prime}$
and $\negp_{i^\pprime,j^\pprime},r_{j^\pprime}$ using
$r_{j^\prime}$ and $r_{j^\pprime}$ as resolution
variables:
\begin{prooftree}
\AxiomC{$\negp_{i^\prime,j^\prime},\negr_{j^\prime},
\negp_{i^\pprime,j^\pprime},\negr_{j^\pprime},\negp_{i,j},r_j$}
\AxiomC{$\negp_{i^\prime,j^\prime},r_{j^\prime}$}
\BinaryInfC{$\negp_{i^\prime,j^\prime},
\negp_{i^\pprime,j^\pprime},\negr_{j^\pprime},\negp_{i,j},r_j$}
\AxiomC{$\negp_{i^\pprime,j^\pprime},r_{j^\pprime}$}
\BinaryInfC{$\negp_{i^\prime,j^\prime},
\negp_{i^\pprime,j^\pprime},\negp_{i,j},r_j$}
\end{prooftree}
For $j^\prime=j^\pprime\not= j$, the clause~(\ref{eq:onelearnclause})
is derived in one step by resolving
the Stone clause against only one of the two learned
clauses $\negp_{i^\prime,j^\prime},r_{j^\prime}$
and $\negp_{i^\pprime,j^\pprime},r_{j^\pprime}$.

Second, for each $j^\pprime\not= j$, resolve the
Stone clause
$\bigvee_{j^\prime=1}^m p_{i^\prime,j^\prime}$
against the learned clause~$\negp_{i^\prime,j},r_j$
and
against $m-1$ of the clauses~(\ref{eq:onelearnclause}) to
obtain
\begin{equation}\label{eq:twolearnclause}
\negp_{i^\pprime,j^\pprime},\negp_{i,j},r_j.
\end{equation}
This is shown in Figure~\ref{fig:twolearning}.

\begin{figure}[t]
\centering
\psset{unit=0.02cm}
\begin{pspicture}(-180,730)(150,1100)
\rput{0}(0,752){$\negp_{i^\pprime,j^\pprime},\negp_{i,j},r_j$}
\psline(-5,764)(-23,800)
\psline(10,764)(95,800)
\rput{0}(-35,812){$p_{i^\prime,1},\negp_{i^\pprime,j^\pprime},\negp_{i,j},r_j$}
\rput{0}(160,812){$\negp_{i^\prime,1},\negp_{i^\pprime,j^\pprime},\negp_{i,j},r_j$}
\psline(-30,824)(-48,860)
\psline(-15,824)(70,860)
\rput{0}(-65,872)%
    {$p_{i^\prime,1},p_{i^\prime,2},\negp_{i^\pprime,j^\pprime},\negp_{i,j},r_j$}
\rput{0}(130,872){$\negp_{i^\prime,2},\negp_{i^\pprime,j^\pprime},\negp_{i,j},r_j$}
\psline[linestyle=dotted,dotsep=5pt,linewidth=1.5pt](-55,882)(-82,938)
\rput{0}(-85,946) {${p_{i^\prime,1}},p_{i^\prime,2},\ldots,{p_{i^\prime,{m-2}}},\negp_{i^\pprime,j^\pprime},\negp_{i,j},r_j$}
\psline(-90,958)(-108,994)
\psline(-75,958)(55,994)
\rput{0}(125,1006){$\negp_{i^\prime,m-1},\negp_{i^\pprime,j^\pprime},\negp_{i,j},r_j$}
\rput{0}(-125,1006){${p_{i^\prime,1}},p_{i^\prime,2},\ldots,
	{p_{i^\prime,{m-1}}},\negp_{i^\pprime,j^\pprime},\negp_{i,j},r_j$}
\psline(-120,1018)(-138,1054)
\psline(-105,1018)(-20,1054)
\rput{0}(-155,1066)%
    {${p_{i^\prime,1}},p_{i^\prime,2},\ldots,
	{p_{i^\prime,{m}}}$}
\rput{0}(45,1066){$\negp_{i^\prime,m},\negp_{i^\pprime,j^\pprime},\negp_{i,j},r_j$}
\end{pspicture}
\caption{The derivation of a clause~(\ref{eq:twolearnclause})
follows this pattern with the one exception (not shown) that
the clause $\negp_{i^\prime,j},\negp_{i^\pprime,j^\pprime},\negp_{i,j},r_j$
is not one of the $m-1$ clauses~(\ref{eq:onelearnclause}) and
the learned clause $\negp_{i^\prime,j},r_j$ is used instead.}
\label{fig:twolearning}
\end{figure}
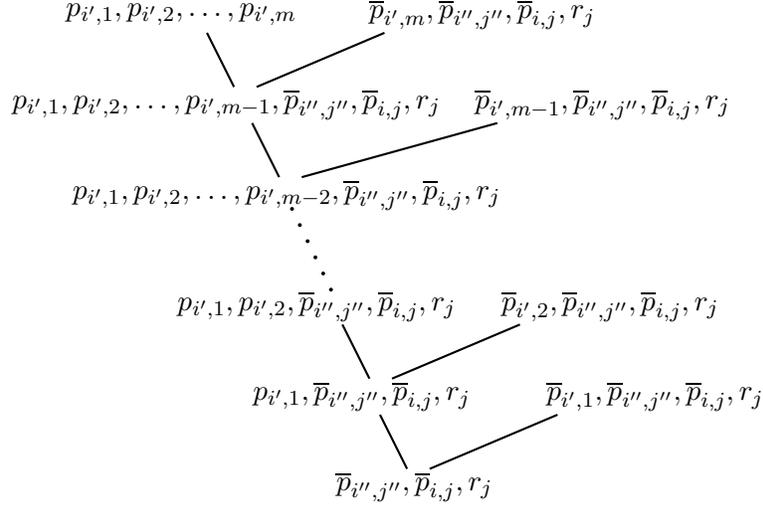

Third, resolve the Stone clause
$\bigvee_{j^\pprime=1}^m p_{i^\pprime,j^\pprime}$ against the
learned clause $\negp_{i^\pprime,j},r_j$ and against the
$m-1$~many clauses~(\ref{eq:twolearnclause}), and derive
the desired clause $\negp_{i,j}, r_j$.  This is shown in
Figure~\ref{fig:threelearning}.

\begin{figure}[t]
\centering
\psset{unit=0.02cm}
\begin{pspicture}(-150,730)(100,1100)
\rput{0}(0,752){$\negp_{i,j},r_j$}
\psline(-5,764)(-23,800)
\psline(10,764)(85,800)
\rput{0}(-25,812){$p_{i^\pprime,1},\negp_{i,j},r_j$}
\rput{0}(120,812){$\negp_{i^\pprime,1},\negp_{i,j},r_j$}
\psline(-30,824)(-48,860)
\psline(-15,824)(60,860)
\rput{0}(-55,872)%
    {$p_{i^\pprime,1},p_{i^\pprime,2},\negp_{i,j},r_j$}
\rput{0}(90,872){$\negp_{i^\pprime,2},\negp_{i,j},r_j$}
\psline[linestyle=dotted,dotsep=5pt,linewidth=1.5pt](-55,882)(-82,938)
\rput{0}(-85,946) {${p_{i^\pprime,1}},p_{i^\pprime,2},\ldots,{p_{i^\pprime,{m-2}}},\negp_{i,j},r_j$}
\psline(-90,958)(-108,994)
\psline(-75,958)(40,994)
\rput{0}(85,1006){$\negp_{i^\pprime,m-1},\negp_{i,j},r_j$}
\rput{0}(-125,1006){${p_{i^\pprime,1}},p_{i^\pprime,2},\ldots,
	{p_{i^\pprime,{m-1}}},\negp_{i,j},r_j$}
\psline(-120,1018)(-138,1054)
\psline(-105,1018)(-10,1054)
\rput{0}(-155,1066)%
    {${p_{i^\pprime,1}},p_{i^\pprime,2},\ldots,
	{p_{i^\pprime,{m}}}$}
\rput{0}(25,1066){$\negp_{i^\pprime,m},\negp_{i,j},r_j$}
\end{pspicture}
\caption{The derivation of the clause $\negp_{i,j},r_j$
follows this pattern with the one exception (not shown)
that the clause $\negp_{i^\pprime,j},\negp_{i,j},r_j$
is not one of the $m-1$ clauses~(\ref{eq:twolearnclause})
and the clause $\negp_{i^\pprime,j},r_j$
is used instead.}
\label{fig:threelearning}
\end{figure}
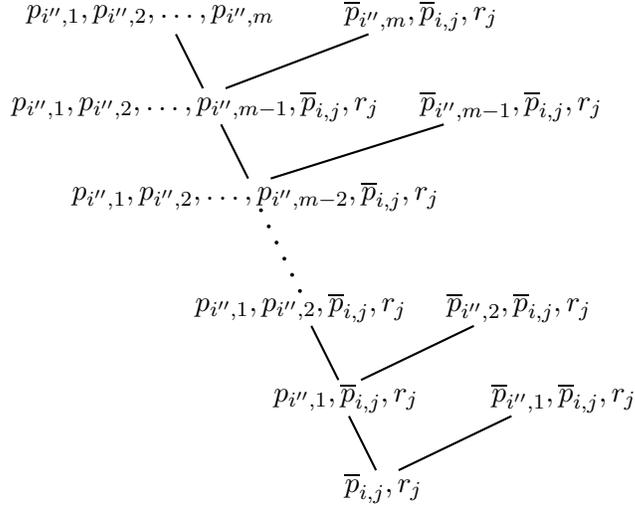

For the regWRTL proof constructed in Section~\ref{sec:pool},
the clause $\negp_{i,j},r_j$ will be learned and available to
use as a lemma once the above three steps have been carried out.

For the regRTI proof described in Section~\ref{sec:regrti},
this is not sufficient, since only clauses derived by input
subderivations are
learned.  For regRTI proofs,
the above three steps are used the first time $\negp_{i,j}, r_j$
is derived.  This results in the clauses~(\ref{eq:onelearnclause})
being learned as input lemmas, but not the clauses~(\ref{eq:twolearnclause}).
The second time $\negp_{i,j},r_j$ is derived,
only the second and third steps of the above derivation are carried out.
This results in the clauses~(\ref{eq:twolearnclause}) becoming
learned, but not the clause $\negp_{i,j},r_j$.
The third time $\negp_{i,j},r_j$ is derived, only the third step
is needed; this results in the clause~$\negp_{i,j},r_j$ becoming learned
as an input lemma.

This leads
to the following definition, which will be useful for the
regRTI derivations of Section~\ref{sec:regrti}:
\begin{defi}
Let $i,i^\prime,i^\pprime,j$ be as above, so in particular
all the clauses $\negp_{i^\prime,j^\prime},r_{j^\prime}$
and $\negp_{i^\pprime,j^\pprime},r_{j^\pprime}$ are
learned.
The clause $\negp_{i,j},r_j$ is called {\em 3-learned} provided it has
been learned.  It is called {\em 2-learned} if
all of the clauses~(\ref{eq:twolearnclause}) for $j^\pprime \neq j$ have been learned.
It is called {\em $1$-learned}
if all of the clauses~(\ref{eq:onelearnclause}) for $j^\prime \neq j$, $j^\pprime \neq j$ have been learned.

A vertex~$i$ is defined to be {$K$-learned}, for $K=1,2,3$,
if and only if every $\negp_{i,j},r_j$ has been $K$-learned.
It is also allowed that $K=0$:
every clause $\negp_{i,j},r_j$ and every vertex~$i$ is considered
to be 0-learned.
\end{defi}

Since axiom clauses are considered to be learned, the
source vertices $i > n$ are 3-learned by definition.

The next theorem summarizes the above construction.
\begin{thm}\label{thm:LearnClause}
Let $i^\prime$ and $i^\pprime$ be the two predecessors of
vertex~$i$.
There is a regular tree-like derivation of the clause $\negp_{i,j},r_j$
from Stone clauses and the clauses $\negp_{i^\prime,j^\prime},r_{j^\prime}$
and $\negp_{i^\pprime,j^\pprime},r_{j^\pprime}$, which has size $O(m^2)$
and resolves on (only) the variables
$r_k$ for $k\not=j$ and the
variables $p_{i^\prime,k}$ and $p_{i^\pprime,k}$ for all~$k$.

In the setting of a regRTI proof, if $i^\prime$ and $i^\pprime$
are 3-learned, and $\negp_{i,j},r_j$ was already \hbox{$K$-learned} for $K<3$,
then there is a regular tree-like derivation of $\negp_{i,j},r_j$
from learned clauses (including Stone clauses)
which causes it to become $(K{+}1)$-learned.
This derivation has size $O(m^2)$ and resolves on
at most the variables
$r_k$ for $k\not=j$ and
variables $p_{i^\prime,k}$ and $p_{i^\pprime,k}$ for all~$k$.
\end{thm}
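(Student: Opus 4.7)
The construction assembled just above (see Figures~\ref{fig:twolearning} and~\ref{fig:threelearning}) already gives the derivation; the plan is to verify its claimed properties and then to analyze the three-pass regRTI version. For the first part, fix $i, i', i'', j$ and view the derivation as a single tree rooted at $\negp_{i,j}, r_j$: Step 3 is the top chain of length $m$; at each of its $m{-}1$ right children (one per $j''\neq j$) we attach a Step 2 subderivation; within each such Step 2 chain of length $m$, at each of its $m{-}1$ right children (one per $j'\neq j$) we attach a Step 1 subderivation of at most two inferences. The remaining right-child leaves in Step 3 and in each Step 2 are the learned clauses $\negp_{i'',j},r_j$ and $\negp_{i',j},r_j$, used once each. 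Summing gives $O(m)$ for Step 3, $O(m)$ for each of the $m{-}1$ Step 2 chains, and $O(1)$ for each of the at most $(m{-}1)^2$ Step 1 subderivations, hence $O(m^2)$ inferences in total.

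For regularity, I would observe that the resolution variables split into three disjoint families: the $r_k$ with $k\neq j$ appear only inside a Step 1 subtree, the $p_{i',k}$ appear only inside a Step 2 chain, and the $p_{i'',k}$ appear only inside the Step 3 chain. A root-to-leaf path enters at most one Step 2 subtree and, within it, at most one Step 1 subtree; within each individual chain the resolution variables are distinct by construction. Hence each variable is resolved on at most once along any path, and the listed sets $\{r_k : k\neq j\}$ and $\{p_{i',k},p_{i'',k}\}$ exhaust the resolution variables actually used. Since neither $p_{i,j}$ nor $r_j$ ever acts as a resolution variable, the additional regularity requirement for derivations of $\negp_{i,j},r_j$ is satisfied as well. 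This establishes the first part of the theorem.

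For the regRTI half, I would treat the three promotions $K\to K{+}1$ in turn. Pass $K=0$ uses the full three-step tree above. Each Step 1 subderivation is input, because every one of its inferences has a hypothesis equal to a lemma leaf $\negp_{i',j'},r_{j'}$ or $\negp_{i'',j''},r_{j''}$ (available because $i'$ and $i''$ are 3-learned), so every clause~(\ref{eq:onelearnclause}) becomes an input lemma and $i$ becomes 1-learned. The Step 2 chains in this pass are not input, however, because at each chain inference the right-hand hypothesis is a clause~(\ref{eq:onelearnclause}) that appears here as the root of an internal Step 1 subtree rather than as a lemma leaf. Pass $K=1$ omits Step 1 and instead places each already-learned clause~(\ref{eq:onelearnclause}) as a lemma leaf on the right of the corresponding Step 2 chain inference; each Step 2 subderivation is then input, so every clause~(\ref{eq:twolearnclause}) becomes an input lemma. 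Pass $K=2$ omits Steps 1 and 2, places each clause~(\ref{eq:twolearnclause}) as a lemma leaf in Step 3, and so Step 3 becomes input, making $\negp_{i,j},r_j$ itself an input lemma. Each pass uses only a subset of the variables and subtrees of Pass $K=0$, so the size and regularity bounds transfer unchanged.

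The one subtle point, rather than a genuine obstacle, is the distinction between a clause being ``derived earlier in postorder'' and its ``appearing here as a lemma leaf''. In Pass $K=0$ each clause~(\ref{eq:onelearnclause}) does become learned mid-traversal, yet at the Step 2 chain inferences it still occurs as an internal node, which is why Pass $K=0$ alone cannot make Step 2 input; the later passes exist precisely to convert these internal roots, one level of the tree at a time, into lemma leaves.
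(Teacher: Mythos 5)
Your proposal is correct and follows essentially the same route as the paper: the theorem merely summarizes the three-step construction of Figures~\ref{fig:twolearning} and~\ref{fig:threelearning}, and the paper's regRTI argument is exactly your three-pass scheme (first pass learns the clauses~(\ref{eq:onelearnclause}), second pass drops Step~1 to learn the clauses~(\ref{eq:twolearnclause}), third pass drops Steps~1 and~2 to learn $\negp_{i,j},r_j$). Your explicit checks of the size bound, the disjointness of the resolution-variable families for regularity, and the lemma-leaf versus internal-node distinction are consistent with, and slightly more detailed than, the paper's presentation.
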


It will also be useful to modify the derivations described above
to allow side variables $\negr_{j_1},\ldots,\negr_{j_\ell}$.
This is summarized by the next theorem.
\begin{thm}\label{thm:LearnClauseSide}
Let $i, i^\prime, i^\pprime$ be as above.
Let $F = \{\negr_{j_1},\ldots,\negr_{j_\ell}\}$ where $\negr_j\notin F$.
There is a regular tree-like derivation of the clause
$F,\negp_{i,j},r_j$
from Stone clauses and the clauses $\negp_{i^\prime,j^\prime},r_{j^\prime}$
and $\negp_{i^\pprime,j^\pprime},r_{j^\pprime}$, which has size $O(m^2)$
and resolves on (only) the variables
$r_k$ for $k\notin\{j,j_1,\ldots,j_\ell\}$
and the variables $p_{i^\prime,k}$ and $p_{i^\pprime,k}$
for all~$k$.
\end{thm}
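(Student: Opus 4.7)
The plan is to adapt the three-step construction of Theorem~\ref{thm:LearnClause} so that the literals in $F$ are simply carried along through the derivation instead of being resolved away. Observe that in the original construction, the only resolutions on variables $r_k$ occur in Step~1, where an induction clause is resolved against the two learned clauses $\negp_{i^\prime,j^\prime},r_{j^\prime}$ and $\negp_{i^\pprime,j^\pprime},r_{j^\pprime}$ to eliminate $\negr_{j^\prime}$ and $\negr_{j^\pprime}$. The idea is: whenever $j^\prime$ (respectively, $j^\pprime$) lies in $\{j_1,\ldots,j_\ell\}$, skip that particular resolution and leave $\negr_{j^\prime}$ (respectively, $\negr_{j^\pprime}$) sitting in the clause.

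Concretely, in the modified Step~1, for each pair $j^\prime\neq j^\pprime$ both distinct from $j$, I would derive the clause $\negp_{i^\prime,j^\prime},\negp_{i^\pprime,j^\pprime},\negp_{i,j},r_j$ augmented by $\negr_{j^\prime}$ when $j^\prime\in\{j_1,\ldots,j_\ell\}$ and by $\negr_{j^\pprime}$ when $j^\pprime\in\{j_1,\ldots,j_\ell\}$; the degenerate case $j^\prime=j^\pprime\neq j$ is handled analogously by omitting the resolution with the single relevant learned clause when the common index lies in $\{j_1,\ldots,j_\ell\}$. Steps~2 and~3 are then carried out verbatim as in Section~\ref{sec:learning}, resolving against the Stone clauses $\bigvee_k p_{i^\prime,k}$ and $\bigvee_k p_{i^\pprime,k}$ on the $p$-variables only. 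Since these $p$-resolutions do not touch any $r$-literals, every $\negr_{j_t}$ introduced in Step~1 survives and is accumulated into the final clause. Using that $\negr_j\notin F$, the resulting clause is exactly $F,\negp_{i,j},r_j$.

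Regularity and the size bound are inherited from the original construction. The resolution variables used are now exactly $r_k$ for $k\notin\{j,j_1,\ldots,j_\ell\}$ together with $p_{i^\prime,k}$ and $p_{i^\pprime,k}$ for all~$k$; in particular, no variable appearing in the goal clause $F,\negp_{i,j},r_j$ is ever resolved on, and each variable is used at most once along any branch for the same reason as before. The clause count is unchanged from Theorem~\ref{thm:LearnClause}, yielding size $O(m^2)$.

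The only point requiring any real verification is the bookkeeping: confirming that the $\negr_{j_t}$ literals introduced (or already present in the induction clauses) in Step~1 propagate correctly through Steps~2 and~3, so that they all appear in the final clause and no extraneous $r$-literals slip in. This is essentially immediate once one notes that Steps~2 and~3 resolve exclusively on $p$-variables and therefore leave every $r$-literal in each intermediate clause untouched. I do not anticipate a genuine obstacle here; the whole argument is a routine, linearity-preserving modification of the construction that established Theorem~\ref{thm:LearnClause}.
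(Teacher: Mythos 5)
Your proposal is correct and is essentially the paper's own argument: the paper likewise obtains the derivation of Theorem~\ref{thm:LearnClauseSide} from that of Theorem~\ref{thm:LearnClause} by simply omitting the Step-1 resolutions on $r_{j_q}$ against the learned clauses $\negp_{i^\prime,j_q},r_{j_q}$ and $\negp_{i^\pprime,j_q},r_{j_q}$, letting the $\negr_{j_q}$ literals ride through the $p$-only resolutions of Steps 2 and 3. Your bookkeeping, regularity, and size observations match the intended construction.
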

The derivation for Theorem~\ref{thm:LearnClauseSide} is
obtained from the derivation for Theorem~\ref{thm:LearnClause}
by omitting
inferences that resolve on the literals~$r_{j_q}$
against the clauses $\negp_{i^\prime,j_q},r_{j_q}$
and $\negp_{i^\pprime,j_q},r_{j_q}$.

\section{The pool/regWRTL refutation}\label{sec:pool}

This section proves Theorem~\ref{thm:pool} by describing
regWRTL proofs of the Stone principles.  Fix an instance of
the Stone principle for a dag~$G$ as above with
$N$ vertices and $m$ stones.  Recall that $G$ has
$n<N$~non-source vertices.

The regWRTL refutation of $\Stone(G,m)$
will be a tree with its final, empty, clause
at the bottom.
The main part of the regWRTL refutation
above the empty clause is a ``skeleton'',
which consists of a long branch containing $n$ segments of length~$m$ each,
as is shown in Figures \ref{fig:skeleton} and~\ref{fig:skeletondetail}.
Each segment in the skeleton corresponds to a non-source vertex in $G$,
and the role of the $i$-th segment is that
clauses of the form $\negp_{i,j},r_j$ are
learned on branches to the right of the segment.
In keeping with the intuition discussed in the introduction, this
skeleton is the bottom part of the proof which resolves on
the literals $p_{i,j}$; the variables~$r_j$
(plus additional variables~$p_{i^\prime,j^\prime}$
with $i^\prime>i$) will be resolved on
above the skeleton.

The first, second, and last segments of the skeleton are
pictured in Figure~\ref{fig:skeleton}.
Each of these three segments is somewhat atypical, but the $i$-th segment for a typical intermediate~$i\in \{3,\ldots,n{-}1\}$
is pictured in Figure~\ref{fig:skeletondetail}.
In the typical situation,
the idea is that for given $3\le i\le n{-}1$ and~$j<m$,
the clause $\negp_{i,j},r_j$ is learned in the $\proofdots$ part of the proof
above the clause $\negp_{1,m},\negp_{i-1,m},\negp_{i,j}$
on the right hand side of Figure~\ref{fig:skeletondetail}.
For $i=1,2,n$ and $j\not= m$, the idea is that the
clause $\negp_{i,j},r_j$ is learned in the $\proofdots$ part
of the proof
above the clause containing  $\negp_{i,j}$.
There are various exceptions to this idea, and several
complications, as discussed below.
However, in all cases, when working in the subproof
in the $\proofdots$ part of the proof above the clause
containing $\negp_{i,j}$, all clauses of the
form $\negp_{i^\prime,j^\prime},r_{j^\prime}$ with $i^\prime>i$
have already been learned.
To maintain the regularity property, this subproof must itself
be regular,
and will not resolve on any literals $p_{i^\prime,j^\prime}$
with $i^\prime\le i$.

\begin{figure}
\centering
\psset{unit=0.017cm}
\begin{pspicture}(-150,0)(100,1100)
\rput{0}(0,0){$\bot$}
\psline(-5,12)(-23,48)
\psline(10,12)(95,48)
\rput{0}(-25,60){$p_{1,1}$}
\rput{0}(110,60){$\negp_{1,1}$}
\rput{0}(110,90){$\proofdots$}
\psline(-30,72)(-48,108)
\psline(-15,72)(70,108)
\rput{0}(-55,120)%
    {$p_{1,1},p_{1,2}$}
\rput{0}(85,120){$\negp_{1,2}$}
\rput{0}(85,150){$\proofdots$}
\psline[linestyle=dotted,dotsep=5pt,linewidth=1.5pt](-55,132)(-82,186)
\rput{0}(-85,194)
	{${p_{1,1}},p_{1,2},\ldots,{p_{1,{m-2}}}$}
\psline(-90,206)(-108,242)
\psline(-75,206)(10,242)
\rput{0}(40,254){$\negp_{1,m-1}$}
\rput{0}(40,284){$\proofdots$}
\rput{0}(-130,254){${p_{1,1}},p_{1,2},\ldots,
	{p_{1,{m-1}}}$}
\psline(-120,266)(-138,302)
\psline(-105,266)(-20,302)
\rput{0}(-165,314)%
    {${p_{1,1}},p_{1,2},\ldots,
	{p_{1,{m}}}$}
\rput{0}(0,316){$\negp_{1,m}$}
\psline(-5,328)(-23,364)
\psline(10,328)(95,364)
\rput{0}(-25,376){$\negp_{1,m},p_{2,1}$}
\rput{0}(125,376){$\negp_{1,m},\negp_{2,1}$}
\rput{0}(125,406){$\proofdots$}
\psline(-30,388)(-48,424)
\psline(-15,388)(70,424)
\rput{0}(-55,436)%
    {$\negp_{1,m},p_{2,1},p_{2,2}$}
\rput{0}(100,436){$\negp_{1,m},\negp_{2,2}$}
\rput{0}(100,466){$\proofdots$}
\psline[linestyle=dotted,dotsep=5pt,linewidth=1.5pt](-55,448)(-82,502)
\rput{0}(-95,510)
	{$\negp_{1,m},{p_{2,1}},p_{2,2},\ldots,{p_{2,{m-2}}}$}
\psline(-90,522)(-108,558)
\psline(-75,522)(10,558)
\rput{0}(55,570){$\negp_{1,m},\negp_{2,m-1}$}
\rput{0}(55,600){$\proofdots$}
\rput{0}(-140,570){$\negp_{1,m},{p_{2,1}},p_{2,2},\ldots,
	{p_{2,{m-1}}}$}
\psline(-120,582)(-138,618)
\psline(-105,582)(-20,618)
\rput{0}(-165,630)%
    {${p_{2,1}},p_{2,2},\ldots,
	{p_{2,{m}}}$}
\rput{0}(15,632){$\negp_{1,m},\negp_{2,m}$}
\psline[linestyle=dotted,dotsep=7pt,linewidth=1.5pt](7.5,644)(7.5,740)
\rput{0}(0,752){$\negp_{1,m},\negp_{n-1,m}$}
\psline(-5,764)(-23,800)
\psline(10,764)(95,800)
\rput{0}(-35,812){$\negp_{1,m},\negp_{n-1,m},p_{n,1}$}
\rput{0}(165,812){$\negp_{1,m},\negp_{n-1,m},\negp_{n,1}$}
\rput{0}(165,842){$\proofdots$}
\psline(-30,824)(-48,860)
\psline(-15,824)(70,860)
\rput{0}(-70,872)%
    {$\negp_{1,m},\negp_{n-1,m},p_{n,1},p_{n,2}$}
\rput{0}(150,872){$\negp_{1,m},\negp_{n-1,m},\negp_{n,2}$}
\rput{0}(150,902){$\proofdots$}
\psline[linestyle=dotted,dotsep=5pt,linewidth=1.5pt](-55,882)(-82,938)
\rput{0}(-85,946)
	 {$\negp_{1,m},\negp_{n-1,m},{p_{n,1}},p_{n,2},\ldots,{p_{n,{m-2}}}$}
\psline(-90,958)(-108,994)
\psline(-75,958)(10,994)
\rput{0}(100,1006){$\negp_{1,m},\negp_{n-1,m},\negp_{n,m-1}$}
\rput{0}(100,1036){$\proofdots$}
\rput{0}(-150,1006){$\negp_{1,m},{p_{n,1}},p_{n,2},\ldots,
	{p_{n,{m-1}}}$}
\psline(-120,1018)(-138,1054)
\psline(-105,1018)(-20,1054)
\rput{0}(-165,1066)%
    {${p_{n,1}},p_{n,2},\ldots,{p_{n,m}}$}
\rput{0}(15,1068){$\negp_{1,m},\negp_{n,m}$}
\rput{0}(15,1098){$\proofdots$}
\end{pspicture}
\caption{The ``skeleton'' of the regWRTL proof.}
\label{fig:skeleton}
\end{figure}
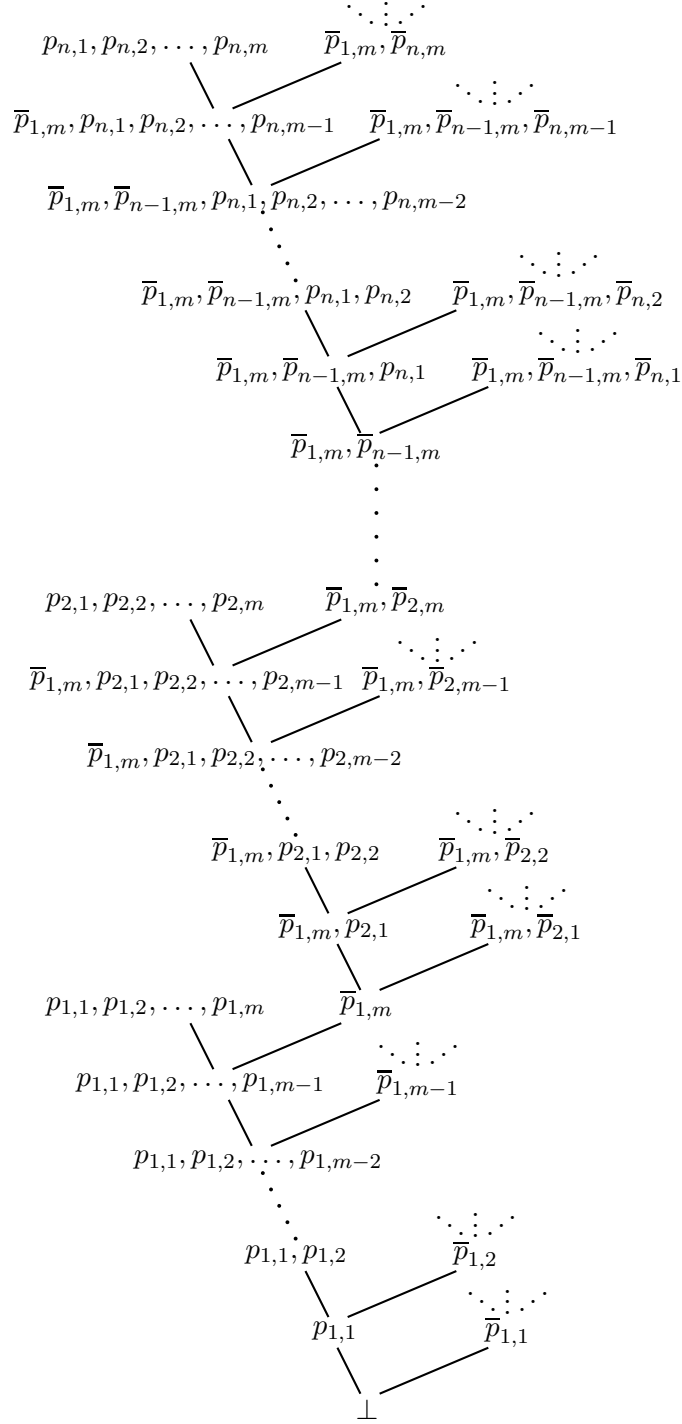

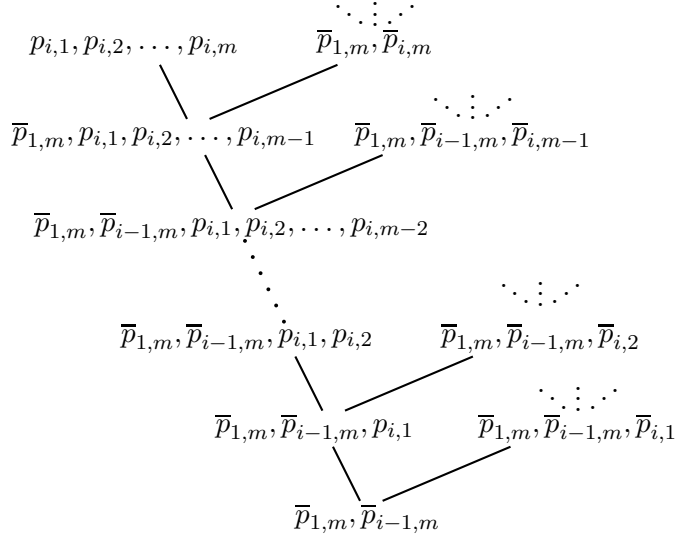
\begin{figure}[t]
\centering
\psset{unit=0.02cm}
\begin{pspicture}(-150,730)(100,1100)
\rput{0}(0,752){$\negp_{1,m},\negp_{i-1,m}$}
\psline(-5,764)(-23,800)
\psline(10,764)(95,800)
\rput{0}(-35,812){$\negp_{1,m},\negp_{i-1,m},p_{i,1}$}
\rput{0}(140,812){$\negp_{1,m},\negp_{i-1,m},\negp_{i,1}$}
\rput{0}(140,837){$\proofdots$}
\psline(-30,824)(-48,860)
\psline(-15,824)(70,860)
\rput{0}(-80,872)%
    {$\negp_{1,m},\negp_{i-1,m},p_{i,1},p_{i,2}$}
\rput{0}(115,872){$\negp_{1,m},\negp_{i-1,m},\negp_{i,2}$}
\rput{0}(115,907){$\proofdots$}
\psline[linestyle=dotted,dotsep=5pt,linewidth=1.5pt](-55,882)(-82,938)
\rput{0}(-90,946)
	 {$\negp_{1,m},\negp_{i-1,m},p_{i,1},p_{i,2},\ldots,{p_{i,{m-2}}}$}
\psline(-90,958)(-108,994)
\psline(-75,958)(10,994)
\rput{0}(70,1006){$\negp_{1,m},\negp_{i-1,m},\negp_{i,m-1}$}
\rput{0}(70,1031){$\proofdots$}
\rput{0}(-135,1006){$\negp_{1,m},p_{i,1},p_{i,2},\ldots,
	{p_{i,{m-1}}}$}
\psline(-120,1018)(-138,1054)
\psline(-105,1018)(-20,1054)
\rput{0}(-155,1066)%
    {${p_{i,1}},p_{i,2},\ldots,p_{i,m}$}
\rput{0}(5,1068){$\negp_{1,m},\negp_{i,m}$}
\rput{0}(5,1093){$\proofdots$}
\end{pspicture}
\caption{The $i$-th segment of the skeleton, for $3\le i < n$.}
\label{fig:skeletondetail}
\end{figure}

We now outline how a clause $\negp_{i,j},r_j$
is learned in the most typical case,
where $i = 3, \ldots, n{-}1$ and
$j \le m{-}2$ or $i=n$ and $j \le m{-}1$.  (The restriction
that $j<m{-}2$ when $i<n{-}1$ is made because a
more complicated construction will be needed when $j=m{-}1$
in order to also learn $\negp_{i,m},r_m$.)
The part of the proof directly above
$\negp_{1,m},\negp_{i-1,m},\negp_{i,j}$ is presented in
Figure~\ref{fig:justaboveskeleton}.

\begin{figure}[t]
\centering
\psset{unit=0.02}
\begin{pspicture}(-180,0)(220,170)
\rput{0}(0,0){$\negp_{1,m},\negp_{i-1,m},\negp_{i,j}$}
\psline(-10,10)(-70,40)
\psline(10,10)(70,40)
\rput{0}(-80,52){$\negp_{i,j},r_j$}
\rput{0}(-80,77){$\proofdots$}
\rput{0}(90,52){$\negp_{1,m},[\negp_{i,j}{,}]\,\negr_j,\negp_{i-1,m}$}
\psline(80,64)(20,96)
\psline(100,64)(160,96)
\rput{0}(20,108){$\negp_{1,m},\negr_m$}
\rput{0}(170,108){$[\negp_{i,j}{,}]\,\negr_j,\negp_{i-1,m},r_m$}
\rput{0}(160,133){$\proofdots$}
\end{pspicture}
\caption{The proof above $\negp_{1,m},\negp_{i-1,m},\negp_{i,j}$.}
\label{fig:justaboveskeleton}
\end{figure}

The clause $\negp_{i,j},r_j$ on the
left hand side of Figure~\ref{fig:justaboveskeleton}
is the clause we want to learn. This clause is derived,
and learned, by the derivation given by Theorem~\ref{thm:LearnClause}.
If $i^\prime$ and~$i^\pprime$ are the two predecessors of~$i$
in~$G$, then $i^\prime>i$ and $i^\pprime>i$ and thus
the clauses $\negp_{i^\prime,j^\prime},r_{j^\pprime}$ and~$\negp_{i^\pprime,j^\pprime},r_{j^\pprime}$ will
have all already been learned, so Theorem~\ref{thm:LearnClause}
is applicable.

On the right hand side of Figure~\ref{fig:justaboveskeleton},
we do not need to learn anything.  We only need to make sure that
the right hand side is a well-formed proof.
The notation $[\negp_{i,j},]$ indicates that $\negp_{i,j}$
may not be present.  In fact,
$\negp_{i,j}$ is present
exactly when $i$ is a predecessor of~$i{-}1$;
otherwise, it is absent.

To describe
the right hand side of Figure~\ref{fig:justaboveskeleton},
first suppose that
vertex~$i$
is not a predecessor of vertex $i{-}1$ in the graph.  In this case,
the literal $\negp_{i,j}$ is not present, and
the leaf clause $\negr_j,\negp_{i-1,m},r_m$ can be proved
by the proof given by Theorem~\ref{thm:LearnClauseSide}.
Second, suppose that vertex~$i$ is a predecessor of $i{-}1$.  We
must give a derivation of
\begin{equation}\label{eq:isPredecessor}
\negp_{i,j},\negr_j,\negp_{i-1,m},r_m .
\end{equation}
Let $i^\prime>i$ be
the other predecessor of~$i{-}1$.  The derivation
proceeds as follows. First, for each $j^\prime\notin \{j,m\}$,
it resolves
the learned clause $\negp_{i^\prime,j^\prime},r_{j^\prime}$
against the Stone clause
\[
\negp_{i^\prime,j^\prime},\negr_{j^\prime},\negp_{i,j},\negr_j,\negp_{i-1,m},r_m
\]
to obtain
\begin{equation}\label{eq:iIsPredecessorA}
\negp_{i^\prime,j^\prime},\negp_{i,j},\negr_j,\negp_{i-1,m},r_m.
\end{equation}
These steps use the resolution variables~$r_{j^\prime}$
for $j^\prime\notin \{j,m\}$.
For $j^\prime=j$, the clause~(\ref{eq:iIsPredecessorA})
is a Stone clause and does not need to be derived.
Then, it resolves the Stone clause
$\bigvee_{j^\prime} \negp_{i^\prime,j^\prime}$
against the learned clause $\negp_{i^\prime,m},r_m$ and the
$m-1$ many clauses~(\ref{eq:iIsPredecessorA}),
resolving on the literals~$p_{i^\prime,j^\prime}$.
This yields the desired clause~(\ref{eq:isPredecessor}).

That completes the description of the typical case of learning
$\negp_{i,j},r_j$. In less typical cases, the changes are as follows:
\begin{itemize}
\item $i = 2$ and $j \le m-2$. As described above, except that the variables
$p_{1,m}$ and $p_{i-1, m}$ coincide.
\item $i=1$, $j \le m-2$.  There is no $p_{i-1,m}$. The clause
$\negp_{1,j}$ is derived from
$\negp_{1,j},r_j$ and $\negp_{1,j},\negr_j$.
The former is learned using the derivation of
Theorem~\ref{thm:LearnClause}, while the latter is a Stone clause.
\item $i = n$ and $j = m$. The clause $\negp_{1,m},\negp_{n,m}$ is derived from
$\negp_{n,m},r_m$ and $\negp_{1,m},\negr_m$. The former is learned
via Theorem~\ref{thm:LearnClause}, the latter is a Stone clause.
\item $2 \ge i \ge n-1$ and $j = m-1, m$.
There is no natural place in the $i$-th segment of the
skeleton to learn clause $\negp_{i,m},r_m$,
but we must learn
it somewhere.  To create ``room''
to learn both $\negp_{i,m-1},r_{m-1}$ and $\negp_{i,m},r_m$,
the clause
$\negp_{1,m},\negp_{i-1,m},\negp_{i,m-1}$
is derived by w-resolution on
the resolution variable~$p_{i,m}$ from the two clauses $\negp_{1,m},\negp_{i-1,m},\negp_{i,m-1}$
(i.e., itself) and $\negp_{1,m},\negp_{i,m}$.
The derivation proceeds as in the typical case above the former clause
and as over $\negp_{1,m},\negp_{n,m}$ above the latter.
\item $i = 1$, $j = m-1, m$. The clause
$\negp_{1,m-1}$ is derived using a w-resolution
inference with the resolution variable~$p_{1,m}$
from itself and $\negp_{1,m}$.
Above those two
clauses, the proof proceeds as in all other cases with
$i=1$.
\end{itemize}

\noindent
We have now fully described the polynomial size refutation
for the
Stone principle
in \hbox{regWRTL}.
A size bound of $O(n m^3) = O(N m^3)$ is immediate from inspection, using
the bound of $O(m^2)$ for the size of the derivations
of Theorems \ref{thm:LearnClause} and~\ref{thm:LearnClauseSide}.
This completes the proof of Theorem~\ref{thm:pool}.

Note that our construction of the regWRTL refutation of $\Stone(G,m)$
makes no use of the assumption that $m \ge N$.  This is in
contrast to the construction of regRTI proofs in the
next section.

\section{The RegRTI proof}\label{sec:regrti}

We now give a regRTI refutation~$R$ for the $\Stone(G,m)$ principles.
We describe~$R$ by building it from
the bottom up, constructing~$R$ in a left-to-right depth-first
fashion.  At each point of the construction, $R$ is a partially
formed regRTI refutation.  The leaves of~$R$ are designated as either
``finished'' or ``unfinished'', and all finished leaves
are to the left of all unfinished leaves.  The finished leaves
are learned clauses; namely, they are
either
valid Stone clauses or have been
derived by an input subderivation earlier in the postorder of~$R$.
Clauses $\negp_{i,j}, r_j$, or vertices~$i$
are defined to be $K$-learned, $K=0,1,2,3$,
according to whether they have been $K$-learned in~$R$
at the point of reaching the leftmost unfinished leaf of~$R$.

Each unfinished leaf will contain a clause~$C$ of
the form
\begin{equation}\label{eq:unfinished}
\negp_{i_1,j_1}, \negp_{i_2,j_2}, \ldots, \negp_{i_k,j_k},
\end{equation}
for $k\ge 0$.  The \emph{domain}, $\Tdom(C)$, of~$C$
is equal to $\{i_1,\ldots,i_k\}$.
Clauses of the form~(\ref{eq:unfinished}) will be required
to have distinct values for the~$i_\ell$'s;
for convenience we assume
that
\begin{equation}\label{eq:unfinishedDomain}
i_1 ~<~ i_2 ~<~ \cdots ~<~ i_k .
\end{equation}
We let $\Tmaxdom(C)$ denote the maximum
member of $\Tdom(C)$, namely~$i_k$.
We let $\Tdom(C^\Tpool)$ denote
the set of~$i$ such that some $\negp_{i,j}\in C^\Tpool$.

We define the notion of a ``well-formed unfinished
clause'' momentarily.
The intuition behind an unfinished leaf~$C$ is based on
the idea that a proof is being constructed from the bottom up,
by a search process which sets the values of
resolution variables in such a way that the clause reached at a given point
becomes false.
Upon reaching~$C$, the search process has
set all of the literals in~$C$ (and~$C^\Tpool$) false, so that
each vertex~$i_\ell\in\Tdom(C)$ has been pebbled with
stone~$j_\ell$.  The search process will generate a
derivation of~$C$ by proving that each such stone~$j_\ell$
is red, namely that $r_{j_\ell}$ is true.  Since well-formed
initial clauses will have $i_1=1$, this will
yield a contradiction and thereby the desired refutation.

The stone $j_\ell$ can be shown to be red using
one of the following three scenarios:
(1)~$i_\ell$ is a source vertex in~$G$,
(2)~there is $i_{\ell^\prime}>i_\ell$ such that
$j_{\ell^\prime} = j_\ell$ and stone $j_{\ell^\prime}$~is red, or
(3)~the two predecessors
of~$i_\ell$ in~$G$ are pebbled with red stones.
Accordingly, for a fixed clause~$C$
of the form~(\ref{eq:unfinished})
satisfying~(\ref{eq:unfinishedDomain}), we define:
\begin{defi}
Let $1\le \ell \le k$. The vertex~$i_\ell$ is said to be \emph{bypassed}
if there is
some $\ell^\prime >\ell$ such that
$j_{\ell^\prime} = j_\ell$.  For the maximum such value~$\ell^\prime$,
the vertex~$i_{\ell^\prime}$
is called the \emph{max-bypasser} of~$i_\ell$.

Now let $1\le \ell < \ell^\prime \le k$.  We say that
vertex~$i_{\ell^\prime}$ \emph{directly supports} vertex~$i_\ell$
if either (1)~$i_{\ell^\prime}$~is the max-bypasser of~$i_\ell$,
or (2)~$i_\ell$~is not bypassed and $i_{\ell^\prime}$ is
one of the two predecessors of~$i_\ell$ in~$G$.
Note that $i_\ell$ can
directly support multiple~$i_{\ell^\prime}$'s.

The ``\emph{supports}'' relation is the reflexive, transitive
closure of ``directly supports''; namely, if
$i_{\ell_1} > i_{\ell_2} > \cdots > i_{\ell_s}$, $s\ge 1$,
and each $i_{\ell_q}$ directly supports $i_{\ell_{q-1}}$,
then $i_{\ell_1}$ supports~$i_{\ell_s}$.  We use $S_\ell(C)$
to denote the set of
vertices in~$\Tdom(C)$ which support~$i_\ell$.
Note that $S_\ell(C) \subseteq \{i_\ell,\ldots,i_k\}$.
\end{defi}

The construction of~$R$ starts with the empty clause as
the first unfinished clause.  The first step will be
to generate new unfinished
clauses of the form $\negp_{1,j_1}$;
namely of the form~(\ref{eq:unfinished}) with
$k=1$ and $i_1=1$.  In subsequent steps, an unfinished
clause is extended by adding literals $\negp_{i_{k+1},j_{k+1}}$
where $i_{k+1}$ is the least vertex $>i_k$ which supports
the vertex $i_1=1$.  The next definitions make this formal.
\begin{defi}
A clause~$C$ is a \emph{well-formed unfinished clause}
provided $C$ is of the form~(\ref{eq:unfinished}),
satisfies (\ref{eq:unfinishedDomain}) with $i_1=1$
and $i_k\le n$,
and the following three conditions hold:
\begin{description}
\item[\rm i] $C^\Tpool$ contains only literals of the
form~$\negp_{i,j}$ for $i\le i_k$, and for each~$i$ there is at most
one~$j$ such that $\negp_{i,j}$ in $C^\Tpool$.
\item[\rm ii] Let $1\le \ell\le k$, and consider $i_\ell$.  Then either
\begin{description}
\item[\rm a] The vertex $i_\ell$ is bypassed, or
\item[\rm b] The vertex $i_\ell$ is not bypassed, and
each predecessor~$i^\prime$ of~$i_\ell$ in~$G$ satisfies
one of the following three conditions:
\begin{description}
\item[\rm ($\alpha$)]
$i^\prime\in \Tdom(C)$;
\item[\rm ($\beta$)]
Vertex~$i^\prime$ is already 3-learned,
and $i^\prime\notin\Tdom(C^\Tpool)$; or
\item[\rm ($\gamma$)] $i^\prime>i_k$,
and $i^\prime$~is not 3-learned.
\end{description}
In case~($\alpha$), $i^\prime$ may
or may not be 3-learned.
\end{description}
\item[\rm iii] Each $i_\ell\in\Tdom(C)$ supports
$i_1=1$. Equivalently, $S_1(C) = \Tdom(C)$.
\end{description}
The empty clause is also a well-formed unfinished clause.
\end{defi}

\begin{defi}
A non-empty well-formed unfinished clause of the form~(\ref{eq:unfinished})
is {\em extendible} provided there is some $i_\ell$
with at least one predecessor~$i^\prime$ in~$G$ that
satisfies condition~($\gamma$).  The empty clause
is also extendible.
\end{defi}

During the construction of~$R$, all unfinished leaves will be
well-formed unfinished clauses. To describe the construction, we explain how to
handle the leftmost unfinished leaf of the so-far constructed
portion of~$R$.

\subsection*{The extendible case}
First suppose that the leftmost unfinished leaf is an extendible
clause~$C$ of the form~(\ref{eq:unfinished})
with $k>0$.
Considering all vertices~$i_\ell$,
find one with
the least predecessor~$i^\prime$ that satisfies~($\gamma$).
This least~$i^\prime$ is
denoted~$i_{k+1}$.
In the special case where $k=0$ and $R$~contains just the empty clause (as~$C$),
let $i_{k+1} = 1$.
In either case, $i_{k+1}$ is not 3-learned
and not a source vertex for~$G$, so $i_{k+1}\le n$.

With $i_{k+1}$ chosen, define $D_t$ to be
the clause $C,\negp_{i_{k+1},t}$.  The idea
is that we would like to replace $C$ in~$R$ with
a derivation of~$C$ from the Stone clause
$p_{i_{k+1},1},\ldots,p_{i_{k+1},m}$ and
the clauses~$D_t$
for $t=1,\ldots, m$ by resolving on the variables
$p_{i_{k+1},t}$.  The problem with this is that
the $D_t$'s may not be well-formed unfinished clauses.
So, we consider the set $S_1(D_t)$, namely
the set of literals that support the root vertex~$1$
in~$D_t$.
\begin{clm}\label{cl:containst}
$i_{k+1} \in S_1(D_t)$.
\end{clm}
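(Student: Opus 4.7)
The plan is to handle the edge case $k = 0$ separately and then focus on $k > 0$. When $k = 0$, the clause $C$ is empty, $i_{k+1} = 1$ by the special case in the construction, and $\Tdom(D_t) = \{1\}$; hence $i_{k+1} = 1 \in S_1(D_t)$ by reflexivity of \emph{supports}.

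For $k > 0$, let $i_{\ell^*}$ denote the vertex in $\Tdom(C)$ for which $i_{k+1}$ was chosen as its least predecessor satisfying condition~($\gamma$). I first verify that $i_{k+1}$ directly supports $i_{\ell^*}$ in $D_t$, then extend this to a full support chain from $i_{k+1}$ down to~$1$. For the direct-support step, the argument splits on whether $t = j_{\ell^*}$: if yes, $i_{\ell^*}$ becomes bypassed in $D_t$ with max-bypasser $i_{k+1}$ (as $k{+}1$ is the largest index), so clause~(1) of ``directly supports'' applies; if no, the bypass status of $i_{\ell^*}$ is inherited from $C$, and well-formedness of $C$ together with~($\gamma$) for the predecessor $i_{k+1}$ forces case~(ii)(b), so $i_{\ell^*}$ is unbypassed and clause~(2) applies because $i_{k+1}$ is a predecessor.

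To finish, start from a chain $i_{\ell^*} = v_1 > v_2 > \cdots > v_s = 1$ in $C$ witnessing $i_{\ell^*} \in S_1(C)$ via condition~(iii). The key observation is that moving from $C$ to $D_t$ affects the support relation only at vertices of $\Tdom(C)$ whose $j$-value equals $t$: for any such vertex~$v$, the new index $k{+}1$ becomes its max-bypasser in $D_t$, destroying direct supports into $v$ from other vertices of $\Tdom(C)$, while $i_{k+1}$ itself becomes a direct supporter of~$v$. Let $q^*$ be the largest index $q \in \{2,\ldots,s\}$ with $v_q$ having $j$-value~$t$, if any. If no such $q^*$ exists, the original chain survives in $D_t$ and prepending $i_{k+1}$ via the previous step yields a valid chain. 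Otherwise, $i_{k+1}, v_{q^*}, v_{q^*+1}, \ldots, v_s = 1$ is a support chain in $D_t$: the first step is a direct support by max-bypassing, and each subsequent step $v_{q-1} \to v_q$ for $q > q^*$ survives in $D_t$ because, by maximality of $q^*$, the $j$-values of $v_{q^*+1}, \ldots, v_s$ all differ from~$t$, leaving their bypass status unaffected.

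The main obstacle is the bookkeeping in the last step: one must verify that only direct supports terminating at $j$-value-$t$ vertices can break when $\negp_{i_{k+1},t}$ is added, and that the chain can be rerouted through $i_{k+1}$ at the latest such vertex. Once this effect on the support structure is tracked carefully, the conclusion is immediate.
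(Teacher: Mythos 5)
Your proof is correct and takes essentially the same route as the paper: both rest on the observation that passing from $C$ to $D_t$ changes the supports relation only at vertices pebbled with stone~$t$, all of which acquire $i_{k+1}$ as max-bypasser in~$D_t$, so a support chain taken from $C$ can be prepended with $i_{k+1}$ or rerouted through a stone-$t$ vertex to witness $i_{k+1}\in S_1(D_t)$. The differences are organizational only (the paper splits on whether stone $t$ occurs in $C$ at all and routes through the least-indexed such vertex, while you patch one chain at its deepest stone-$t$ vertex), and your appeal to the non-bypassedness of $i_{\ell^*}$ via condition (ii)(b) is the same implicit step the paper makes when it asserts $i_{k+1}$ is a predecessor of some non-bypassed vertex in $S_1(D_t)$.
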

The claim is trivial for $k=0$.
To prove it when $k>0$, first suppose that
$t$ is equal to some $j_\ell$ for $\ell\le k$.
Choose $i_\ell$ to be the
least value such that $t=j_\ell$; of course
$i_{k+1}$ is a max-bypasser for~$i_\ell$ in~$D_t$.
Using the fact that $S_1(C)$ contains
every $i_{\ell^\prime}$ for $\ell^\prime\le k$,
a simple induction argument proves
that $i_{\ell^\prime}\in S_1(D_t)$ for every $\ell^\prime\le \ell$.
It follows that $i_{k+1}\in S_1(D_t)$ since it is the max-bypasser
of $i_\ell\in S_1(D_t)$.
Second, suppose that $t$ is distinct from all the $j_\ell$ values.
A similar induction argument proves readily
that $S_1(D_t)$ contains every $i_\ell$ for $\ell \le k$.
Hence $i_{k+1}\in S_1(D_t)$ since $i_{k+1}$ is a predecessor of some
non-bypassed $i_\ell\in S_1(D_t)$.
\hfill \qed
\begin{clm}\label{cl:unionok}
There is a $t$ such that $S_1(D_t) = \{i_1,\ldots,i_k,i_{k+1}\}$.
\end{clm}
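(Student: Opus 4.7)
The plan is to choose $t$ to be any index in $\{1,\ldots,m\} \setminus \{j_1,\ldots,j_k\}$. Such an index exists because condition~(i) of well-formedness forces $i_k \le n$, hence $k \le n < N \le m$, so $|\{1,\ldots,m\} \setminus \{j_1,\ldots,j_k\}| \ge m - k \ge 1$. This is the only place in the argument where the hypothesis $m \ge N$ is needed. With this choice of $t$ fixed, I would verify that $\Tdom(D_t) \subseteq S_1(D_t)$; the reverse inclusion $S_1(D_t) \subseteq \Tdom(D_t)$ holds by the definition of $S_1$.

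The key observation is that, because $t$ differs from every $j_\ell$, the new vertex $i_{k+1}$ (carrying the literal $\negp_{i_{k+1},t}$ in $D_t$) cannot be a bypasser of any $i_\ell$ with $\ell \le k$. Consequently, the bypass status and the max-bypasser of every such $i_\ell$ are the same in $D_t$ as in $C$, and every ``directly supports'' edge with both endpoints in $\{i_1,\ldots,i_k\}$ that was present in $C$ remains present in $D_t$. There may be additional direct-support edges originating from $i_{k+1}$ (if $i_{k+1}$ is a graph-predecessor of some non-bypassed $i_\ell$), but these can only enrich the relation. Taking the reflexive transitive closure and using condition~(iii) for $C$, which states $S_1(C) = \Tdom(C) = \{i_1,\ldots,i_k\}$, we conclude $\{i_1,\ldots,i_k\} \subseteq S_1(D_t)$. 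Finally, Claim~\ref{cl:containst} yields $i_{k+1} \in S_1(D_t)$, completing the inclusion.

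The main subtlety, and the reason a careful choice of $t$ is needed, is the case $t = j_\ell$ for some $\ell \le k$. In that scenario $i_\ell$ becomes bypassed in $D_t$ with max-bypasser $i_{k+1}$, and the direct-support edges from the graph-predecessors of $i_\ell$ are discarded (replaced by the single edge from $i_{k+1}$); this could in principle cascade and disconnect some $i_{\ell'}$ from $i_1$, breaking condition~(iii) for $D_t$. Selecting $t$ outside $\{j_1,\ldots,j_k\}$ sidesteps this issue entirely, at the cost of the assumption $m \ge N$, which, as noted at the end of Section~\ref{sec:pool}, is not required by the regWRTL construction.
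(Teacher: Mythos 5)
Your proposal is correct and follows essentially the same route as the paper: choose $t$ outside $\{j_1,\ldots,j_k\}$ (which exists because $m\ge N > n \ge k$), observe that such a $t$ leaves the support structure of~$C$ intact so that all of $i_1,\ldots,i_k$ lie in $S_1(D_t)$, and then use Claim~\ref{cl:containst} (whose second subcase is exactly this situation) to add $i_{k+1}$. The paper phrases the preservation step as ``a similar induction argument,'' while you argue it via invariance of the directly-supports relation, but the content is the same.
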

To prove this, take $t$ distinct from all $j_\ell$ values,
and use the result from the second subcase of the previous claim.
There must exist such a~$t$ since $m\ge N$, i.e., there are at least
as many stones as vertices.
\hfill \qed

\medskip

\begin{figure}[t]
\centering
\psset{unit=0.02cm}
\begin{pspicture}(-180,750)(100,1080)
\rput{0}(0,752){$C$}
\psline(-5,764)(-23,800)
\psline(10,764)(100,800)
\rput{0}(-25,812){$C^*_{m-1},p_{i_{k+1},m}$}
\rput{0}(135,812){$C_m,\negp_{i_{k+1},m}$}
\psline(-34,824)(-52,860)
\psline(-19,824)(71,860)
\rput{0}(-60,872){$C^*_{m-2},p_{i_{k+1},m-1},p_{i_{k+1},m}$}
\rput{0}(125,872){$C_{m-1},\negp_{i_{k+1},m-1}$}
\psline[linestyle=dotted,dotsep=5pt,linewidth=1.5pt](-62,887)(-85,943)
\rput{0}(-85,956){$C^*_2,p_{i_{k+1},3},\ldots,p_{i_{k+1},m}$}
\psline(-95,968)(-113,1004)
\psline(-82,968)(3,1004)
\rput{0}(30,1016){$C_2,\negp_{i_{k+1},2}$}
\rput{0}(-125,1016){$C_1,p_{i_{k+1},2},\ldots,p_{i_{k+1},m}$}
\psline(-123,1028)(-141,1064)
\psline(-108,1028)(-23,1064)
\rput{0}(-145,1076){$p_{i_{k+1},1},\ldots,p_{i_{k+1},m}$}
\rput{0}(10,1076){$C_1,\negp_{i_{k+1},1}$}
\end{pspicture}
\caption{The proof in the extendible case.  The resolution variables
are the $p_{i_{k+1},s}$'s. Each clause $C_i^*$ is
the union of the clauses $C_1,\ldots,C_i$.}
\label{fig:extendible}
\end{figure}
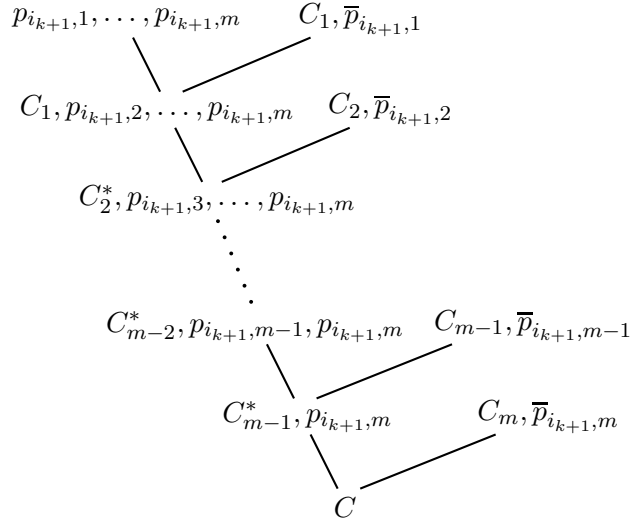

Define $C_t$ to be the clause containing the literals
$\negp_{i_\ell,j_\ell}$ for $i_\ell\in S_1(D_t)$ with $1\le \ell \le k$.
Claim~\ref{cl:unionok} shows that $C = \bigcup_t C_t$.
With the aid of Claim~\ref{cl:containst}
and the fact that $C$ satisfies conditions ii and~iii of the
definition of well-formedness, it follows
from the definition of~$C_t$ that
the clause $C_t,\negp_{i_{k+1},t}$
also satisfies the conditions ii and~iii.

Now replace the clause~$C$ in~$R$ with the
resolution derivation shown in Figure~\ref{fig:extendible}.
The clauses~$C_t,\negp_{i_{k+1},t}$
are clearly well-formed unfinished clauses,
the $C_t$'s are subclauses of~$C$, and the
(sub)clauses~$C^*_j$ in the figure are defined
to equal $\bigcup_{t\le j} C_t$.
By Claim~\ref{cl:unionok},
$C^*_m$ is equal to~$C$, which implies that the
inference used to derive $C$ is a valid non-degenerate resolution
inference. For all the other newly
added inferences, this is clear.

That completes the construction
in the case where the leftmost unfinished leaf is extendible.

\subsection*{The non-extendible case}
The construction for the case where
the leftmost unfinished leaf is not extendible
is summarized in the following lemma.

\begin{lem}\label{lm:closure}
Suppose that clause~$C$, of the form~{\rm (\ref{eq:unfinished})},
is the leftmost unfinished leaf of~$R$, and that $C$~is not
extendible.  Then there is a regRTI derivation~$R_C$ of~$C$ from
the Stone clauses and the learned clauses in~$R$
to the left of~$C$.
If the clause $\negp_{i_k,j_k},r_{j_k}$
was already $K$-learned for $K<3$, then it becomes $(K{+}1)$-learned
after the clause~$C$ is replaced in~$R$ with~$R_C$.
The only variables used as resolution variables in~$R_C$
are variables~$r_j$, and variables $p_{i,j}$ for $i\notin\Tdom(C^\Tpool)$.
The size of~$R_C$ is $O(Nm^2)$.
\end{lem}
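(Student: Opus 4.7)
The plan is to build $R_C$ as a ``chain'' that walks through $\Tdom(C)$ from $i_k$ down to $i_1 = 1$, with each link producing a clause that, modulo the pebblings of~$C$, asserts the redness of the current vertex's stone. The non-extendibility of~$C$ is essential: it implies that for every $i_\ell \in \Tdom(C)$, each predecessor of $i_\ell$ in $G$ either lies in $\Tdom(C)$ (case~$\alpha$) or is already 3-learned (case~$\beta$). In particular, for the maximum vertex $i_k$, which is automatically not bypassed and whose graph predecessors lie above $i_k$ and hence outside $\Tdom(C)$, both predecessors must be 3-learned. Invoking Theorem~\ref{thm:LearnClause} in its regRTI mode then produces an input subderivation of $\negp_{i_k,j_k},r_{j_k}$ that simultaneously gives the first chain link and raises the learned level of $\negp_{i_k,j_k},r_{j_k}$ from $K$ to $K{+}1$, as the lemma requires.

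I next extend the chain by processing $i_\ell$ for $\ell=k{-}1, k{-}2,\ldots, 1$ in turn. A bypassed $i_\ell$ with max-bypasser $i_{\ell'}$ simply carries over the previously derived clause, since $r_{j_{\ell'}}=r_{j_\ell}$. A non-bypassed $i_\ell$ uses the Stone induction clause $\negp_{i',j'},\negr_{j'},\negp_{i'',j''},\negr_{j''},\negp_{i_\ell,j_\ell},r_{j_\ell}$ for its two predecessors $i',i''$. For each 3-learned predecessor (case~$\beta$) I iterate over the $m$ possible stone indices via the Stone clause $\bigvee_{j'}p_{i',j'}$ in a cascade of the shape of Figures~\ref{fig:twolearning} and~\ref{fig:threelearning}, cancelling each $\negr_{j'}$ against the learned $\negp_{i',j'},r_{j'}$ and substituting $\negp_{i',j_\ell},r_{j_\ell}$ when $j'=j_\ell$ (where the induction clause itself is unavailable). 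For each predecessor in $\Tdom(C)$ (case~$\alpha$) I instead fix the stone index $j'=j_{\ell'}$ given by~$C$ and cancel the matching $\negr_{j'}$ against the earlier chain clause for $i_{\ell'}$, leaving the literal $\negp_{i_{\ell'},j_{\ell'}}$ in the growing clause. Because $S_1(C)=\Tdom(C)$, the last chain link at $\ell=1$ produces exactly $C,r_{j_1}$, and a final resolution with the Stone clause $\negp_{1,j_1},\negr_{j_1}$ on~$r_{j_1}$ yields~$C$.

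The stated side conditions follow by inspection. Each lemma leaf of $R_C$ is either a Stone clause or a previously-derived input lemma (supplied by Theorem~\ref{thm:LearnClause} at $i_k$ or by an earlier chain link), and each link's subderivation qualifies as an input subderivation in its own right, so $R_C$ is regRTI. The resolution variables are exactly the $r_j$'s (at most one per non-bypassed link) and the $p_{i,j}$'s for 3-learned predecessors $i$; by well-formedness condition~(i), such $i$ satisfy $i>i_k$ and therefore $i\notin \Tdom(C^\Tpool)$. The size is $O(m^2)$ per chain link, and with at most $N$ links the total is $O(Nm^2)$. I expect the main obstacle to be the regularity check: one must verify that no $r_j$ or $p_{i,j}$ is resolved on twice along any single root-to-leaf path of~$R_C$. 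This is controlled by the strict stratification of the chain by~$\ell$, together with the conventions that a bypassed vertex reuses the max-bypasser's $r$-resolution rather than introducing a fresh one and that each 3-learned predecessor's $p_{i,j}$ cascade occurs in a single chain link, so that distinct links never compete for the same resolution variable along a shared path.
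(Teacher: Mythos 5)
Your opening and closing moves match the paper: $i_k$ is necessarily non-bypassed with both predecessors outside $\Tdom(C)$ and hence 3-learned, so Theorem~\ref{thm:LearnClause} starts the derivation and delivers the promotion from $K$-learned to $(K{+}1)$-learned (this is exactly the paper's Lemma~\ref{lm:LtqB2} with $q=1$), and the final resolution against the Stone clause $\negp_{1,j_1},\negr_{j_1}$ is also as in the paper. The gap is in the middle: your construction presupposes that the ``support'' structure on $\Tdom(C)$ is a path, so that each vertex's red-stone clause is consumed exactly once, by the next link down. In general it is a dag: a vertex $i_{\ell^\prime}\in\Tdom(C)$ may directly support two or more vertices of $\Tdom(C)$ (the paper notes this explicitly), and a vertex may have \emph{both} predecessors in $\Tdom(C)$. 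On a single spine, the positive literal $r_{j_{\ell^\prime}}$ is available only until the first link that resolves on it; when a second link needs $\negr_{j_{\ell^\prime}}$ cancelled, you must either resolve on $r_{j_{\ell^\prime}}$ again along the same root-to-leaf path (violating regularity), leave $\negr_{j_{\ell^\prime}}$ in the final clause (which is then not $C$), or splice in a fresh copy of the subderivation of the redness clause. The last option cannot be rescued by ``reusing it as a lemma'': in regRTI only clauses derived by \emph{input} subderivations are learned, and these intermediate redness clauses are in general not learned --- that is the whole point of the $1$-/$2$-/$3$-learning bookkeeping --- so naive duplication threatens both regularity bookkeeping and the size bound. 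Your stated safeguards (stratification by $\ell$, bypassed vertices reusing the max-bypasser's resolution, one cascade per link) do not address this, and your regularity claim also silently assumes each stone index triggers at most one spine resolution, which fails exactly in the shared-support situation.

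This is precisely the difficulty the paper's machinery exists to handle: it partitions $\Tdom(C)$ via the sets $B$, $B_1$, $B_2$, $B^+$ and the subgraphs $S^B_\ell(C)$, keeps the problematic negative literals as explicit side clauses $E_\ell$ and $F_q$ that are only cancelled at designated skeleton nodes (Figure~\ref{fig:nonextendible}), and inside each subgraph builds a regular \emph{dag-like} derivation $L^\prime_{t_q}$ (so shared supporters are derived once) which is then converted to a tree-like regRTI derivation via Theorem~3.3 of~\cite{BHJ:ResTreeLearning}, at the quadratic cost $O(N_q^2)$ that feeds the $O(Nm^2)$ bound. None of this appears in your proposal, and without it the ``chain'' argument does not go through; your per-link $O(m^2)$ accounting also bypasses the genuine size issue that the dag-to-tree conversion is there to solve. (A minor point: the fact that the cascaded predecessors $i^\prime$ satisfy $i^\prime\notin\Tdom(C^\Tpool)$ comes from condition~($\beta$) of well-formedness, not from condition~(i), and such $i^\prime$ need not exceed $i_k$.)
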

\begin{proof}  First suppose that $k=1$.  Then $i_1=1$,
so $\negp_{i_1,j_1},\negr_{i_1}$ is a Stone clause.
Since $C$ is not extendible,
the two predecessors $i^\prime$ and~$i^\pprime$ of~$i_1$
are 3-learned.  Theorem~\ref{thm:LearnClause} gives
a resolution derivation~$L_1$ of $\negp_{i_1,j_1},r_{j_1}$
from 3-learned clauses using
resolution on (at most) the variables $r_j$ with $j\not=j_1$
and the variables $p_{i^\prime,j}$ and~$p_{i^\pprime,j}$.
Form the
derivation~$R_C$ as
\begin{equation}\label{eq:rjklearning}
\hbox{\AxiomC{$\proofdots^{\textstyle L_1}$}
\noLine
\UnaryInfC{$\negp_{i_1,j_1},r_{j_1}$}
\AxiomC{$\negp_{i_1,j_1},\negr_{j_1}$}
\BinaryInfC{$\negp_{i_1,j_1}$}
\DisplayProof }
\end{equation}
Here and in the sequel, notation of the form $\proofdots^{\textstyle L}$ indicates that
the dots should be replaced by the derivation $L$ without its (already pictured) final
clause.

By the second half of Theorem~\ref{thm:LearnClause},
if $\negp_{i_1,j_1},r_{j_1}$ was $K$-learned for $K<3$,
it becomes $(K{+}1)$-learned.
The size of~$L_1$ is $O(m^2)$.

We now assume that $k > 1$. This case is more involved, and takes
up most of the rest of the paper.

Define $B$ to be the set of those non-bypassed vertices $i_\ell \in \Tdom(C)$ that
have at least one predecessor outside of $\Tdom(C)$.
Note that since $C$ is not extendible,
a non-bypassed element of $\Tdom(C)$ belongs to $B$ exactly if
it has a predecessor satisfying condition~($\beta$)
from the definition of well-formedness; in particular,
the predecessor has to be 3-learned. We further split $B$
into the sets $B_1$ and $B_2$ depending on whether
one or both predecessors satisfy~($\beta$). Note that
$i_1=1$~might be in~$B_1$, but it cannot be in~$B_2$,
because  $i_2$ is either a predecessor
or a max-bypasser of~$i_1$.
On the other hand, $i_k$~must be in~$B_2$.
We also define $B^+$ to be the
set containing~$B$ and all those $i_\ell$
which are bypassed by a max-bypasser~$i_{\ell^\prime} \in B$.
Thus, a vertex $i_\ell$ is in $B^+$ if, according
to the minimal partial assignment falsifying clause~$C$,
$i_\ell$~is pebbled by a stone that also covers
some vertex in~$B$.

The idea behind the construction of~$R_C$ is that
vertices $i_\ell$ that belong to
$B \cup \{1\}$ determine a ``partition''
of $G{\upharpoonright_{\Tdom(C)}}$
into possibly overlapping subgraphs (namely the sets $S_\ell^B(C)$
defined next). $R_C$ will deal with these subgraphs independently.
The set $S_\ell^B(C)$ is a subgraph with sink~$i_\ell$:

% \begin{defi}
% The set $S_\ell^B(C)$ is the smallest set containing $i_\ell$
% such that, if $i_{\ell^\prime}\in S_\ell^B(C)$,
% $i_{\ell^\prime}$ is directly
% supported by~$i_{\ell^\pprime}$,
% and
% either $\ell^\prime=\ell$ or $i_{\ell^\prime}\notin B^+$,
% then $i_{\ell^\pprime}\in S_\ell^B(C)$.
% \end{defi}

\begin{defi}
For $i_\ell \in \Tdom(C)$, the set $S_\ell^B(C)$ is the smallest set containing $i_\ell$
and satisfying the following
whenever $i_{\ell^\prime}\in S_\ell^B(C)$ and
either $\ell^\prime=\ell$ or $i_{\ell^\prime}\notin B$:
(1)~If $i_{\ell^\prime}$ is bypassed in~$C$ by
max-bypasser~$i_{\ell^\pprime}\notin B$,
then $i_{\ell^\pprime}\in S_\ell^B(C)$
and
(2)~if $i_{\ell^\prime}$ is not bypassed in~$C$ and
$i_{\ell^\pprime}$ is a predecessor of~$i_{\ell^\prime}$ in~$G$,
then $i_{\ell^\pprime}\in S_\ell^B(C)$.
\end{defi}
Note that the closure condition~(1)
does not allow $i_{\ell^\pprime}\in B$,
whereas (2)
does allow it.

Enumerate $B\cup\{1\}$ in decreasing order
as $i_{t_1},i_{t_2},\ldots,i_{t_r}$ so that $t_1=k$ and $t_r=1$.
Below, we use the convention that
the index~$q$ ranges over $1,\ldots,r$, so that the
vertices~$i_{t_q}$ are the members of $B\cup\{1\}$.

The overall structure of the derivation~$R_C$ is shown
in Figure~\ref{fig:nonextendible} below.
$R_C$ will be built around
certain clauses related to the vertices $i_{t_q} \in B \cup \{1\}$
(the clauses $C^*_{q-1},F_q$ in Figure~\ref{fig:nonextendible}).
To complete~$R_C$, we will have to construct
derivations of the clauses to the side of this ``skeleton''
(the derivations~$L_{t_q}$
of the clauses $C_{t_q},F_q,r_{j_{t_q}}$ in Figure~\ref{fig:nonextendible}).
In the case where $i_{t_q} \in B_2$, this is easy (Lemma~\ref{lm:LtqB2} below),
and this easy case is the one which makes $i_{t_1} = i_k$
become $(K{+}1)$-learned instead of $K$-learned.
In the case where $i_{t_q} \in B_1 \cup \{1\}$, we only need to obtain
a valid derivation, but this requires a relatively complex construction
based on the structure of the subgraph determined by $i_{t_q}$
(Lemma~\ref{lm:Ltq}).

A clause appearing in~$R_C$ will contain
some literals of the form~$\negp_{i,j}$, some
literals of the form~$\negr_{j}$,
and at most one literal of the form~$r_j$.
These literals have to be selected
so as to avoid irregularities and degenerate inferences.
$R_C$~is defined using
four special types of clauses: $C_\ell$, $C_q^*$, $E_\ell$, and~$F_q$.
(The $C_\ell$'s are different from the $C_t$'s
used for the extendible case.)
The $C_\ell$'s and~$C_q^*$'s consist of~$\negp_{i,j}$'s,
while the $E_\ell$'s and~$F_q$'s consist of~$\negr_{j}$'s.
As suggested by the notation,
the $C^*_q$'s and~$F_q$'s are parametrized
by vertices $i_{t_q} \in B \cup \{1\}$,
whereas the $C_\ell$'s and~$E_\ell$'s
are parametrized by vertices $i_\ell \in \Tdom(C)$.

%$C_\ell$~is the
%subclause of~$C$ containing those literals $\negp_{i_{\ell^\prime},j_{\ell^\prime}}$
%where $i_{\ell^\prime}$ ``B-supports'' vertex~$i_\ell$,
%but omitting~$\negp_{i_\ell,j_\ell}$ if $i_\ell$ is bypassed:
\begin{defi}
If $i_\ell$ is not bypassed in the clause~$C$,
then $C_\ell$ is the
clause containing the
literals $\negp_{i_{\ell^\prime},j_{\ell^\prime}}$
for $i_{\ell^\prime}\in S_\ell^B(C)$.  If $i_\ell$ is bypassed in~$C$,
then $C_\ell$ is the same set
except the literal $\negp_{i_\ell,j_\ell}$ is omitted.
\end{defi}

\begin{defi}
$C_q^*$ equals
$\{\negp_{1,j_1}\}\cup\bigcup_{q^\prime> q}C_{t_{q^\prime}}$.
\end{defi}
The reason we have to explicitly include
$\negp_{1,j_1}$ in $C_q^*$ is that
$\negp_{1,j_1}\notin C_1$ if vertex~$1$ is bypassed.

\begin{defi}
If $i_\ell\in B$, then
the clause $E_\ell$ is the set of literals~$\negr_{j_{\ell^\prime}}$
such that $\ell^\prime\not=\ell$ and $i_{\ell^\prime}\in S_\ell^B(C) \cap B^+$.
For $i_\ell\notin B$,
the clause~$E_\ell$ contains the literals~$\negr_{j_{\ell^\prime}}$
such that $i_{\ell^\prime}\in S_\ell^B(C)\cap B^+$.
\end{defi}
Informally, $E_\ell$ contains
the literals~$\negr_{j_{\ell^\prime}}$
for $i_{\ell^\prime} \in B^+$ a source (leaf) vertex
above~$i_\ell$ relative to the subgraph~$S_\ell^B(C)$.

The cases where $i_\ell$ is bypassed by a max-bypasser~$i_{\ell^\prime}$
deserve special mention.  Of course, $j_\ell = j_{\ell^\prime}$.
If $i_{\ell^\prime}\in B$,
then $i_\ell\in B^+\setminus B$, and we have
$S_\ell^B(C) =\{ i_\ell \}$, $C_\ell = \emptyset$, and
$E_\ell = \{ \negr_{j_\ell} \}$.
On the other hand, if $i_{\ell^\prime}\notin B$,
then $i_\ell\notin B^+$, and we have
$S^B_\ell(C) = \{i_\ell\} \cup S^B_{\ell^\prime}(C)$,
$C_\ell = C_{\ell^\prime}$,
and $E_\ell = E_{\ell^\prime}$.

Before defining the $F_q$'s, we prove a lemma listing some
basic properties of the $C$'s, $C^*$'s, and $E$'s:

\begin{lem}\label{lm:SimpleECS} \leavevmode
\begin{description}
\item[\rm (a)] Each
$i_\ell\not= 1$ is a member of some $S_{t_q}^B(C)$.
Thus, $C_0^*$ is equal to~$C$.
\item[\rm (b)] If $1$ is bypassed by max-bypasser~$i_\ell$,
then $\ell=2$.
\item[\rm (c)] $\negr_{j_{t_{r-1}}} \in E_1$.
\item[\rm (d)] $E_{t_q} \subseteq \{ \negr_{j_{t_1}},\ldots,\negr_{j_{t_{q-1}}} \}$, for all~$q$.
\end{description}
\end{lem}
\begin{proof}
(a)~Given $\ell\not=1$, consider a chain
$i_{\ell}{=}i_{\ell_1} > i_{\ell_2} > \cdots > i_{\ell_s} {=} 1$
of directly supporting vertices from $i_\ell$ to~$1$.
Let $i_{\ell_a}$ be the first member of~$B\cup\{1\}$ in this sequence.
Then $\ell_a = t_{q^\prime}$ for some~$q^\prime$, and
we have {$i_\ell \in S_{t_{q^\prime}}^B(C)$}.
Thus, $\negp_{i_\ell,j_\ell}$ is in $C_{t_{q^\prime}}$
and hence~$C_0^*$.

(b)~Suppose that $1$ has max-bypasser~$i_\ell$ with $\ell>2$.
Then $i_\ell > i_2$.
There must exist a chain of directly supporting vertices
from $i_2$ to~$1$, but this contradicts $i_\ell>i_2$.

(c)~Consider a chain
$i_{t_{r-1}}{=}i_{\ell_1} > i_{\ell_2} > \cdots > i_{\ell_s} {=} 1$
of directly supporting
vertices from $i_{t_{r-1}}$ to~$1$.
By the descending order of the~$t_q$'s,
none of $i_{\ell_2},\ldots,i_{\ell_{s-1}}$
is in~$B$.
If $i_{t_{r-1}}$ is not the max-bypasser of~$i_{\ell_2}$, then
$i_{t_{r-1}}\in S_1^B(C)$ and thus $\negr_{j_{t_{r-1}}}\in E_1$.
Suppose instead that $i_{t_{r-1}}$ is the max-bypasser of~$i_{\ell_2}$,
so $i_{\ell_2}\in B^+\setminus B$.
Then $i_{\ell_2}\in S_1^B(C)$, so $\negr_{j_{\ell_2}}\in E_1$.
By the definition of bypasser, $j_{\ell_2}=j_{t_{r-1}}$,
so $\negr_{j_{t_{r-1}}}\in E_1$.

(d) Let $\negr_{j_\ell}\in E_{t_q}$, so
$\ell\ge t_q$ and $i_{\ell} \in B^+$.
If $i_{\ell} \in B$, then $\ell>t_q$,
and thus $\ell = t_{q^\prime}$ for some $q^\prime<q$.
If $i_{\ell}\in B^+\setminus B$, then $i_\ell$ has
max-bypasser~$i_{\ell^\prime} \in B$.
Then $\ell^\prime>\ell\ge t_q$
and thus $\ell^\prime$ equals $t_{q^\prime}$ for some $q^\prime<q$, and
by the definition of bypasser $j_{\ell^\prime}=j_\ell$. Therefore
$\negr_{j_\ell}$ is the same as $\negr_{j_{t_{q^\prime}}}$.
\end{proof}

Define $F_q$ to be the clause
\[
F_q ~:=~~
\left\{ \begin{array}{ll}
E_{t_r} = E_1 \qquad & \hbox{if $q=r$ and $1\notin B$} \\
\negr_{j_{t_1}},\ldots,\negr_{j_{t_{q-1}}} \qquad & \hbox{otherwise}
\end{array}\right.
\]
By Lemma~\ref{lm:SimpleECS}(d),
$E_{t_q} \subseteq F_q$.  Note that $F_1 = \emptyset$, because $r \neq 1$
and so in evaluating $F_1$ we use the second case of the definition of $F_q$.

We can now describe the derivation~$R_C$ in detail. As mentioned,
the general structure of~$R_C$ is shown in Figure~\ref{fig:nonextendible}.
The parts of the derivation displayed in brackets are omitted if $1 \in B^+\setminus B$.

Note that $C^*_0,F_1$, the final clause of $R_C$, is the same as~$C$. For $q<r$, the inference
\begin{prooftree}
\AxiomC{$C_{t_q},F_q,r_{j_{t_q}}$}
\AxiomC{$C^*_q,F_{q+1}$}
\BinaryInfC{$C^*_{q-1},F_q$}
\end{prooftree}
resolves on~$r_{j_{t_q}}$, and it is non-degenerate
by the definitions of the $F_q$'s and $C_q^*$'s.
(By Lemma~\ref{lm:SimpleECS}(c,d),
this is true
even in the case where $q=r-1$ and $1\notin B$.)
It follows that the resolution variables on the path from
$C_{t_q},F_q,r_{j_{t_q}}$ to $C$
are exactly $r_{j_{t_1}}, \ldots, r_{j_{t_q}}$.
The derivation  $L_{t_q}$ is described below (cf.\ Lemmas \ref{lm:LtqB2} and \ref{lm:Ltq}).

For $q=r$, we have $t_q$ equal to~$1$.
If $1\notin B^+\setminus B$, then,
as shown in Figure~\ref{fig:nonextendible}, the clause
$C^*_{r-1},F_{r}$ is derived by:
\begin{prooftree}
\AxiomC{$\proofdots^{\textstyle L_1}$}
\UnaryInfC{$C_1,F_r,r_{j_1}$}
\AxiomC{$\negp_{1,j_1},\negr_{j_1}$}
\BinaryInfC{$C^*_{r-1},F_r$}
\end{prooftree}
Again, the final inference deriving $C^*_{r-1},F_{r}$ is non-degenerate.
The upper right clause $\negp_{1,j_1},\negr_{j_1}$ is a Stone clause.
The derivation $L_1$~is described in Lemma~\ref{lm:Ltq}.

On the other hand, if $1\in B^+\setminus B$
and thus 1 has max-bypasser $i_2\in B$,
then
$C_{r-1}^*$ is $\negp_{1,j_1}$
and $F_r = E_1$ is $\negr_{j_1}$.
Therefore clause $C_{r-1}^*,F_r$
is equal to the Stone clause $\negp_{1,j_1},\negr_{j_1}$,
and there is no need to add to $R_C$ anything above it.

To finish the description
of~$R_C$, we must describe the derivations $L_{t_q}$.  This
is done by the next two lemmas.

\begin{lem}\label{lm:LtqB2}
Suppose that $i_{t_q}\in B_2$.  Then there is a regRTI proof~$L_{t_q}$
of $C_{t_q},F_q,r_{j_{t_q}}$ of size~$O(m^2)$.  The variables
used as resolution variables in~$L_{t_q}$ are (at most) the
variables~$r_j$ where $\negr_j \notin F_q\cup\{\negr_{j_{t_q}}\}$
and the variables $\negp_{i^\prime,j}$ and~$\negp_{i^\pprime,j}$
for $i^\prime$ and~$i^\pprime$ the predecessors of~$i_{t_q}$ in~$G$.

In the special case of $q=1$, so $t_1=k$, if $\negp_{i_k,j_k},r_{j_k}$
was $K$-learned for $K<3$, then $\negp_{i_k,j_k},r_{j_k}$
becomes $(K{+}1)$-learned in~$L_k$.
\end{lem}
Note that the values $i^\prime$ and~$i^\pprime$
are not in $\Tdom(C^\Tpool)$
by the definition of~$B_2$; therefore, the resolution variables
$\negp_{i^\prime,j}$ and~$\negp_{i^\pprime,j}$ do not violate the
regularity condition.  Also note that, since $1\notin B_2$,
we have $t_q\not= 1$ and the
condition that $\negr_j \notin F_q\cup\{\negr_{j_{t_q}}\}$
is equivalent to $j\notin \{j_{t_1},\ldots,j_{t_q}\}$. This
is precisely what is needed to ensure regularity.
\begin{proof}
The clause $C_{t_q}$ consists of the single literal $\negp_{i_{t_q},j_{t_q}}$.
Thus the derivation~$L_{t_q}$ is the derivation given by
Theorem~\ref{thm:LearnClauseSide}.
When $q=1$ and $t_q =k$, we have $F_1 = \emptyset$
and $C_k$ is the clause $\negp_{i_k,j_k}$.  In this
case, $L_k$ is the derivation given by the second part
of Theorem~\ref{thm:LearnClause}.
\end{proof}

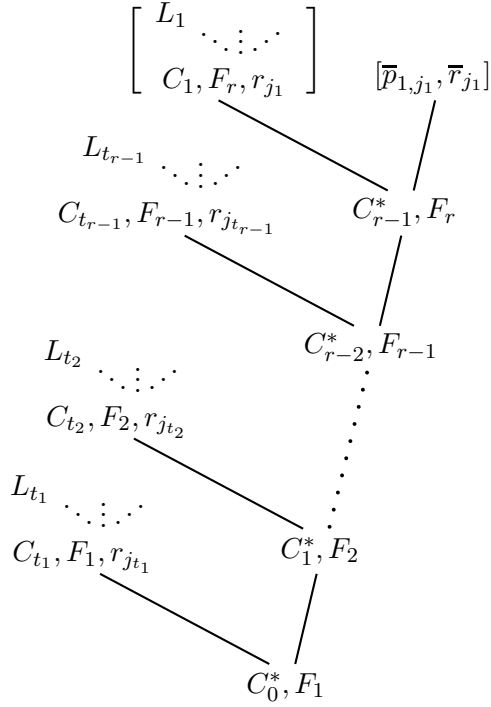
\begin{figure}[t]
\centering
\psset{unit=0.015cm}
\begin{pspicture}(-150,1180)(70,1780)

\rput{0}(0,1200){$C^*_{0},F_1$}

\psline(-15,1220)(-165,1300)
\psline(8,1220)(27,1300)

\rput{0}(-180,1344){$\begin{array}{c}{\vphantom{\proofdots}}^{\textstyle L_{t_1}}~\proofdots~~~~~~~ \\
C_{t_1},F_1, r_{j_{t_1}}\end{array}$}
\rput{0}(30,1320){$C^*_{1},F_2$}
%\rput{0}(-115, 1335){$\proofdots^{\textstyle L}$}

\psline(15,1340)(-135,1420)
% \psline(38,1340)(57,1420)

\rput{0}(-150,1464){$\begin{array}{c}{\vphantom{\proofdots}}^{\textstyle L_{t_2}}~\proofdots~~~~~~~~ \\
C_{t_2},F_2, r_{j_{t_2}}\end{array}$}
% \rput{0}(60,1440){$C^*_{2},F_3$}
%
% \psline(45,1460)(-120,1540)
\psline[linestyle=dotted,dotsep=5pt,linewidth=1.5pt](38,1340)(72,1480)
%
% \rput{0}(-120,1584){$\begin{array}{c}{\vphantom{\proofdots}}^{\textstyle L_{t_3}}~\proofdots~~~~~~~~ \\
% C_{t_3},F_3, r_{j_{t_3}}\end{array}$}
\rput{0}(75,1500){$C^*_{r-2},F_{r-1}$}

\psline(60,1520)(-90,1600)
\psline(83,1520)(102,1600)

\rput{0}(-105,1644){$\begin{array}{c}{\vphantom{\proofdots}}^{\textstyle L_{t_{r-1}}}~\proofdots~~~~~~~~~~~ \\
C_{t_{r-1}},F_{r-1}, r_{j_{t_{r-1}}}\end{array}$}
\rput{0}(105,1620){$C^*_{r-1},F_{r}$}

\psline(90,1640)(-60,1720)
\psline(113,1640)(132,1720)

\rput{0}(-55,1764){$\left[\begin{array}{c}{\vphantom{\proofdots}}^{\textstyle L_1}~\proofdots~~~~~~~ \\
C_{1},F_r, r_{j_{1}}\end{array}\right]$}
\rput{0}(130,1740){$[\negp_{1,j_1},\negr_{j_1}]$}
\end{pspicture}
\caption{The structure of the derivation $R_C$.}
\label{fig:nonextendible}
\end{figure}

\begin{lem}\label{lm:Ltq}
Suppose that $1\le q\le r$,
and $i_{t_q}\in B_1\cup\{1\}$ and $i_{t_q}\notin B^+\setminus B$.
Let $N_q = |S_{t_q}^B(C)|$.
\begin{description}
\item[\rm (a)]
There is a regular dag-like derivation~$L^\prime_{t_q}$
of size $O(N_q)$ which contains each
clause $C_\ell,E_\ell,r_{j_\ell}$
for $i_{\ell} \notin B^+$ such that $i_\ell \in S_{t_q}^B(C)$.
The subderivation~$L^\prime_\ell$ of~$L^\prime_{t_q}$
that ends with $C_\ell,E_\ell,r_{j_\ell}$ uses as resolution
variables precisely the
variables~$r_{j_{\ell^\prime}}$
such that $j_\ell\not=j_{\ell^\prime}$,
$i_{\ell^\prime}\in S_\ell^B(C)$, and $i_{\ell^\prime}\notin B^+$.

If $i_{t_q}\notin B_1$, then $t_q=1$ and $q=r$ and the final
clause of~$L^\prime_{t_q}$ is
$C_1,E_1,r_{j_1}$ (that is, with $\ell = t_q = 1$).
Suppose instead that $i_{t_q}\in B_1$ and $i_\ell$ is the (only)
predecessor of $i_{t_q}$ such that $i_\ell\in\Tdom(C^\Tpool)$.
If $i_\ell\in B^+$, then $S_{t_q}^B(C) \subseteq B^+$
and $L^\prime_{t_q}$ is empty.
If $i_\ell\notin B^+$,
then the final clause of $L^\prime_{t_q}$
is $C_\ell,E_\ell,r_{j_\ell}$.
\item[\rm (b)] There is a regRTI derivation~$L_{t_q}$ of
$C_{t_q},F_q,r_{j_{t_q}}$
of size $O(N_q^2 + m)$. The variables used as resolution
variables in~$L_{t_q}$ are at most the
variables~$r_j$
where $\negr_j\notin F_q\cup\{\negr_{t_q}\}$
and, if $i_{t_q}\in B_1$,
the variables $p_{i^\pprime,j}$ where
$i^\pprime$ is the predecessor of~$i_{t_q}$
such that $i^\pprime\notin \Tdom(C^\Tpool)$.
\end{description}
\end{lem}
Note that $i_{t_q}\in B^+\setminus B$ only if $t_q=1$,
and this is the case
where $L_1$ is not needed.
\begin{proof}
The regular dag-like refutation~$L^\prime_{t_q}$ for part~(a)
is constructed by induction on $\ell$ such that $i_\ell \in S_{t_q}^B(C) \setminus B^+$.
We add the clauses $C_\ell,E_\ell,r_{j_\ell}$ to~$L^\prime_{t_q}$ for larger
values of~$\ell$ first, making sure that the condition on resolution
variables remains satisfied. The inductive argument splits into four cases.

\emph{Case 1:} $i_\ell$ is bypassed by max-bypasser~$i_{\ell^\prime}$.
Since $i_{\ell}\notin B^+$, we have $i_{\ell^\prime}\notin B^+$.
The induction hypothesis tells us
that $C_{\ell^\prime},E_{\ell^\prime},r_{j_\ell^\prime}$ appears
in the already constructed portion of~$L^\prime_{t_q}$.
>From the remarks after the definitions
of $C_\ell$ and~$E_\ell$, we have $j_\ell = j_{\ell^\prime}$
and $C_\ell = C_{\ell^\prime}$ and $E_\ell = E_{\ell^\prime}$.
Thus $C_{\ell^\prime},E_{\ell^\prime},r_{j_\ell^\prime}$
is exactly the same clause as $C_\ell,E_\ell,r_{j_\ell}$.
So no further
resolution inferences need to be added to~$L^\prime_{t_q}$
to handle~$i_\ell$, and $L^\prime_\ell = L^\prime_{\ell^\prime}$.
Since $S^B_\ell(C) = \{i_\ell\} \cup S^B_{\ell^\prime}(C)$
and $i_{\ell}\notin B^+$,
the subderivation $L^\prime_\ell$ satisfies the condition about
which resolution variables are used.

The remaining cases all assume that
$i_\ell$ is not bypassed.

\emph{Case 2 (the base case):} both of $i_\ell$'s predecessors
$i_{\ell^\prime}$ and $i_{\ell^\pprime}$ are in~$B^+$.
Then $C_\ell$ is
$\negp_{i_{\ell^\prime},j_{\ell^\prime}},\negp_{i_{\ell^\pprime},j_{\ell^\pprime}},\negp_{i_\ell,j_\ell}$.
Also, $E_\ell$ is $\negr_{j_{\ell^\prime}},\negr_{j_{\ell^\pprime}}$.
Therefore the Stone clause~(\ref{eq:LprimeellStone}) is the
same as $C_\ell,E_\ell,r_{j_\ell}$, so $L^\prime_\ell$ consists
of just this clause.

\emph{Case 3:} neither of $i_\ell$'s predecessors
$i_{\ell^\prime}$ and $i_{\ell^\pprime}$ is in~$B^+$.
The already constructed part of~$L^\prime_\ell$
contains the clauses $C_{\ell^\prime},E_{\ell^\prime},r_{j_{\ell^\prime}}$
and $C_{\ell^\pprime},E_{\ell^\pprime},r_{j_{\ell^\pprime}}$.
Assume for the moment that $j_{\ell^\prime}\not= j_{\ell^\pprime}$.
W.l.o.g., the highest index $s^\pprime$ such that $j_{s^\pprime} = j_{\ell^\pprime}$
is strictly greater than
highest index $s^\prime$ such that $j_{s^\prime} = j_{\ell^\prime}$; otherwise
interchange $\ell^\prime$ and~$\ell^\pprime$.
It follows from the induction step
for Case~1 that~$L^\prime_{\ell^\pprime}$
equals ~$L^\prime_{s^\pprime}$.
Therefore, by the assumption that
$s^\prime < s^\pprime $  and the inductive
condition on resolution variables,
$r_{j_{\ell^\prime}}$
is not resolved on in~$L^\prime_{\ell^\pprime}$.

Using the Stone clause
\begin{equation}\label{eq:LprimeellStone}
\negp_{i_{\ell^\prime},j_{\ell^\prime}},\negr_{j_{\ell^\prime}},
\negp_{i_{\ell^\pprime},j_{\ell^\pprime}},\negr_{j_{\ell^\pprime}},
\negp_{i_\ell,j_\ell}, r_{j_\ell}
\end{equation}
form $L^\prime_\ell$ as
\begin{equation}\label{eq:Case2}
\AxiomC{$\negp_{i_{\ell^\prime},j_{\ell^\prime}},\negr_{j_{\ell^\prime}},
\negp_{i_{\ell^\pprime},j_{\ell^\pprime}},\negr_{j_{\ell^\pprime}},
\negp_{i_\ell,j_\ell}, r_{j_\ell}$}
\AxiomC{$\proofdots^{\textstyle L^\prime_{\ell^\pprime}}$}
\UnaryInfC{$C_{\ell^\pprime},E_{\ell^\pprime}, r_{j_{\ell^\pprime}}$}
\BinaryInfC{$\negp_{i_{\ell^\prime},j_{\ell^\prime}},\negr_{j_{\ell^\prime}},
\negp_{i_{\ell^\pprime},j_{\ell^\pprime}},C_{\ell^\pprime},E_{\ell^\pprime},
\negp_{i_\ell,j_\ell}, r_{j_\ell}$}
\AxiomC{$\proofdots^{\textstyle L^\prime_{\ell^\prime}}$}
\UnaryInfC{$C_{\ell^\prime},E_{\ell^\prime}, r_{j_{\ell^\prime}}$}
\BinaryInfC{$C_\ell,E_\ell, r_{j_\ell}$}
\DisplayProof
\end{equation}
We have $E_\ell = E_{\ell^\prime}\cup E_{\ell^\pprime}$.
The literals $\negp_{i_{\ell^\prime},j_{\ell^\prime}}$ and $\negp_{i_{\ell^\pprime},j_{\ell^\pprime}}$
may or may not appear in $C_{\ell^\prime}$ and
$C_{\ell^\pprime}$ (respectively), but in any case
$C_\ell$ is the same as
$\negp_{i_{\ell^\prime},j_{\ell^\prime}},C_{\ell^\prime},
\negp_{i_{\ell^\pprime},j_{\ell^\pprime}},C_{\ell^\pprime},
\negp_{i_\ell,j_\ell}$.
Note that the proof as pictured above might be slightly misleading:
we are constructing a dag-like derivation, not a tree-like
derivation and the
subderivations $L^\prime_{\ell^\prime}$
and $L^\prime_{\ell^\pprime}$ need not be
disjoint.  The regularity of $L^\prime_\ell$ follows from the induction
hypotheses and the fact that
$r_{j_{\ell^\prime}}$ is not a resolution variable of
$L^\prime_{\ell^\pprime}$.  The condition on which
resolution variables are used in~$L^\prime_\ell$ follows
from
$S^B_\ell(C) = \{i_\ell\}\cup S^B_{\ell^\prime}(C)\cup S^B_{\ell^\pprime}(C)$.

For the remaining part of case 3,
suppose that $j_{\ell^\prime}=j_{\ell^\pprime}$.
Let $s$ be the highest index such that $j_s = j_{\ell^\prime} = j_{\ell^\pprime}$.
By the remarks after the definitions of $C_\ell$ and~$E_\ell$,
we have
$C_{\ell^\prime} = C_{\ell^\pprime} = C_{s}$
and
$E_{\ell^\prime} = E_{\ell^\pprime} = E_{s}$.
Thus also
$L^\prime_{\ell^\prime} = L^\prime_{\ell^\pprime} = L^\prime_{s}$.
Now, argue as in the previous case, but replace
the inferences~(\ref{eq:Case2}) with the inference
\begin{prooftree}
\AxiomC{$\negp_{i_{\ell^\prime},j_{s}},
\negp_{i_{\ell^\pprime},j_{s}},\negr_{j_{s}},
\negp_{i_\ell,j_\ell}, r_{j_\ell}$}
\AxiomC{$\proofdots^{\textstyle L^\prime_{s}}$}
\UnaryInfC{$C_{s},E_{s}, r_{j_{s}}$}
\BinaryInfC{$C_\ell,E_\ell,r_{j_\ell}$}
\end{prooftree}
The left hypothesis is
a Stone clause.  The rest of the argument for this subcase is
as in the previous case.

\emph{Case 4:} $i_\ell$ has predecessors
$i_{\ell^\prime}\notin B^+$ and $i_{\ell^\pprime}\in B^+$.
The induction hypothesis says that
the clause $C_{\ell^\prime},E_{\ell^\prime},r_{j_{\ell^\prime}}$
has already been derived.  Then $C_\ell$ is
$\negp_{i_{\ell^\prime},j_{\ell^\prime}},
   \negp_{i_{\ell^\pprime},j_{\ell^\pprime}},
   \negp_{i_\ell,j_\ell},C_{\ell^\prime}$,
and $E_\ell$ is $\negr_{j_{\ell^\pprime}}, E_{\ell^\prime}$,
so we can form $L^\prime_\ell$ as
\begin{prooftree}
\AxiomC{$\negp_{i_{\ell^\prime},j_{\ell^\prime}},\negr_{j_{\ell^\prime}},
\negp_{i_{\ell^\pprime},j_{\ell^\pprime}},\negr_{j_{\ell^\pprime}},
\negp_{i_\ell,j_\ell}, r_{j_\ell}$}
\AxiomC{$\proofdots^{\textstyle L^\prime_{\ell^\prime}}$}
\UnaryInfC{$C_{\ell^\prime},E_{\ell^\prime}, r_{j_{\ell^\prime}}$}
\BinaryInfC{$C_\ell,E_\ell, r_{j_\ell}$}
\end{prooftree}
using resolution on $r_{\ell^\prime}$.
The regularity
of~$L^\prime_\ell$ and the conditions on which resolution variables
are used in~$L^\prime_\ell$ follow from the induction hypothesis
for~$L^\prime_{\ell^\prime}$ and the fact
that $S^B_\ell(C) = \{i_\ell,i_{\ell^\pprime}\}\cup S^B_{\ell^\prime}$.

That completes the proof of part~(a).  The size bound $O(N_q)$
on~$L^\prime_\ell$
follows from the fact that each of the four cases in the construction
of $L^\prime_\ell$ added $O(1)$ clauses.

To apply part~(a) in the proof of~(b), we need
regRTI derivations~$L^\pprime_{t_q}$,
instead of the regular dag-like
refutations~$L^\prime_{t_q}$.
For this,
Theorem~3.3 of~\cite{BHJ:ResTreeLearning} states that
the desired derivation~$L^\pprime_{t_q}$, containing exactly
the same clauses as~$L^\prime_{t_q}$, can be constructed
from~$L^\pprime_{t_q}$
with the size of~$L^\pprime_{t_q}$ bounded by the product of
the size and the height of
$L^\prime_{t_q}$.  Thus, the size of~$L^\pprime_{t_q}$
is $O(N_q^2)$.  Furthermore, $L^\pprime_{t_q}$~uses
the same resolution variables as~$L^\prime_{t_q}$.

We now prove part~(b).
First suppose that $q=r$ and $1\notin B$.
Then $t_q = 1$ and $F_q = E_1$.
In this case,
the regRTI derivation~$L^\pprime_1$ obtained from part~(a)
is already the desired derivation.
This only uses resolution variables~$r_{j_\ell}$
such that $i_{\ell}\notin B^+$ and hence
$j_\ell \notin \{j_{t_1},\ldots,j_{t_q}\}$.

Otherwise, $i_{t_q}\in B_1$,
and hence $F_q = \{\negr_{j_{t_1}},\ldots,\negr_{j_{t_{q-1}}}\}$.
Let $i_{\ell^\prime}$ and $i^\pprime$ be the
predecessors of~$i_{t_q}$; so,
$i^\pprime$ is 3-learned and $i^\pprime\notin \Tdom(C^\Tpool)$.
Note that $j_{\ell^\prime}$ may or may not be
in $\{j_{t_1},\ldots,j_{t_q}\}$.
Consider the $m-2$ many Stone clauses
for $j^\pprime\notin\{j_{t_q},j_{\ell^\prime}\}$:
\[
\negp_{i_{\ell^\prime},j_{\ell^\prime}},\negr_{j_{\ell^\prime}},
\negp_{i^\pprime,j^\pprime},\negr_{j^\pprime},
\negp_{i_{t_q},j_{t_q}}, r_{j_{t_q}}.
\]
Resolving these with the 3-learned clauses
$\negp_{i^\pprime,j^\pprime},r_{j^\pprime}$
for $j^\pprime\notin \{j_{t_1},\ldots,j_{t_{q}},j_{\ell^\prime}\}$
and the Stone clause $\bigvee_{j^\pprime} p_{i^\pprime,j^\pprime}$
gives the clause
\begin{equation}\label{eq:caseb}
\negp_{i_{\ell^\prime},j_{\ell^\prime}},\negr_{j_{\ell^\prime}},
F_q,\negp_{i_{t_q},j_{t_q}}, r_{j_{t_q}}
\end{equation}
by resolution on the variables $r_{j^\pprime}$
for $j^\pprime\notin \{j_{t_1},\ldots,j_{t_{q}},j_{\ell^\prime}\}$
and the variables $p_{i^\pprime,j^\pprime}$ for all~$j^\pprime$.

Suppose that $i_{\ell^\prime}\in B^+$ and thus $\negr_{j_{\ell^\prime}}\in F_q$
and $C_{t_q}$ is the
clause $\negp_{i_{\ell^\prime},j_{\ell^\prime}},\negp_{i_{t_q},j_{t_q}}$.
Then (\ref{eq:caseb}) is the same as
$C_{t_q},F_q,r_{j_{t_q}}$ and the construction of~$L_{t_q}$
for part~(b) is complete.  In this case, $L_{t_q}$ has size $O(m)$.

Alternately, suppose that $i_{\ell^\prime}\notin B^+$,
so $\negr_{j_{\ell^\prime}}\notin F_q$.
By the last assertion of part~(a), $L^\pprime_{t_q}$
is a regRTI derivation
of $C_{\ell^\prime},E_{\ell^\prime},r_{j_{\ell^\prime}}$
of size $O(N_q^2)$.
Form $L_{t_q}$ by resolving this
against~(\ref{eq:caseb})
on the variable~$r_{j_{\ell^\prime}}$ to obtain
$C_{t_q}, F_q, r_{j_{t_q}}$.
This is a valid resolution inference since
$C_{t_q}$ is
$\negp_{i_{t_q},j_{t_q}},\negp_{i_{\ell^\prime},j_{\ell^\prime}},C_{\ell^\prime}$,
and $E_{\ell^\prime}\subseteq F_q$.
The size of~$L_{t_q}$ is $O(N_q^2 + m)$.

This completes the description
of~$L_{t_q}$, and the proof of Lemma~\ref{lm:Ltq}.
\end{proof}

This also completes the construction
of the derivation~$R_C$ of Lemma~\ref{lm:closure}.
Since the construction of~$R_C$ invokes Lemmas
\ref{lm:LtqB2} and~\ref{lm:Ltq} at most $N$ times,
the size of~$R_C$ is bounded
by $O(N m^2 + N(N^2+m)) = O(N m^2)$.

\subsection*{Finishing the proof}

All that remains to finish the proof of Theorem~\ref{thm:regRTI}
is to bound the size of the refutation~$R$.
As described above, $R$~is built up from a derivation
containing only the empty clause
by applying the constructions of Figure~\ref{fig:extendible}
and Lemma~\ref{lm:closure}, always to the
leftmost currently unfinished leaf.

If the clause at that leaf is non-extendible,
it is dealt with using Lemma~\ref{lm:closure},
and no new unfinished
leaves appear.

Otherwise, the leftmost unfinished leaf contains an extendible clause,
and it is dealt with using the construction of Figure~\ref{fig:extendible},
leading to $m$ new unfinished leaves, at least one of them
%with a highest-index literal  $\negp_{i_{k+1},t}$ such that
labeled with a clause $C_t,\negp_{i_{k+1},t}$
such that $\negp_{i_{k+1},t}, r_t$ has
not been 3-learned.

A clause of the type allowed to appear in an unfinished leaf
can be iteratively extended at most $n$~times, so at some point
we have to reach a situation in which the construction of
Figure~\ref{fig:extendible}
produces only non-extendible clauses.
When Lemma~\ref{lm:closure} is
applied to those clauses, at least one $K$-learned clause of the form $\negp_{i,j}, r_j$
becomes $(K{+}1)$-learned.
There are $nm$ clauses of the form $\negp_{i,j}, r_j$ to be learned.
Once all of them have
been $3$-learned, all remaining unfinished leaves
become non-extendible and can be dealt with using
the construction of Lemma~\ref{lm:closure}.

Consider the set of clauses in~$R$ which at some
point were unfinished leaves during the construction
of~$R$.  Call such a clause {\em green}
if it was handled with Lemma~\ref{lm:closure}
and thereby a clause of the form $\negp_{i,j}, r_j$
became $(K{+}1)$-learned instead of
$K$-learned.
The remaining clauses are called {\em non-green}:
these clauses were either handled by Lemma~\ref{lm:closure}
but without any clause $\negp_{i,j}, r_j$ becoming $(K{+}1)$-learned, or were
handled with the construction of Figure~\ref{fig:extendible}.
These green and non-green clauses inherit a tree structure
from~$R$.
% Consider the tree in which all clauses in $R$ that
%at some point were unfinished leaves appear as nodes,
%with the clauses treated using the construction of Figure \ref{fig:extendible} as internal nodes,  the clauses
%treated using Lemma \ref{lm:closure} as leaves, and the tree structure given by %Figure \ref{fig:extendible}
%in the natural way. Call a leaf of this tree \emph{green} if the application of Lemma \ref{lm:closure} to it
%led to some clause becoming $(K{+}1)$-learned.
It follows from the discussion above that this tree
is $m$-branching, has depth at most~$n$, and
contains at most $3nm$ green leaves. Moreover,
each node in the tree is a green leaf,
is an ancestor or sibling of a green leaf,
or is the sibling of an ancestor
of a green leaf. It is straightforward to
prove that such a tree has at most $O(n^2m^2)$ leaves.
Since each of these leaves corresponds to
a single application of Lemma~\ref{lm:closure},
the size of~$R$ is at most $O(N^3m^4)$.
This completes the proof of Theorem~\ref{thm:regRTI}.
\end{proof}

Since the Stone tautologies
contain $O(Nm^3)$ many symbols, and since $N\le m$,
the size upper bound $O(N^3m^4)$ is quadratic
in the size of the clauses
being refuted.

The refutation~$R$ described above may not be greedy.
Although we lack a
proof, it is possible that
$R$~can be made greedy by omitting
subderivations of already learned clauses.
It is also possible that $R$ is, or could be made to be, unit-propagating.
In particular, note that
the only unit clauses that appear in~$R$ are the literals $\negp_{1,j}$
that appear as unfinished clauses in the very first step of the
construction of~$R$.  However, we have not tried to formally analyze the
greedy or unit-propagating properties of~$R$.

\section*{Acknowledgement}
\noindent
We thank the two anonymous referees for useful comments and suggestions.

\bigskip

\vbox{
\hfill{\it The proof is complete, If only I've stated it thrice.} \\
\hbox{\relax}\hfill {\sl Fit the Fifth -- The Beaver's Lesson, The Hunting of the Snark} \\
\hbox{\relax}\hfill Lewis Carroll \\
}

\bibliographystyle{alpha}
\bibliography{logic}

\begin{thebibliography}{HBPVG08}

\bibitem[AJPU07]{AJPU:regularresolution}
Michael Alekhnovich, Jan Johannsen, Toniann Pitassi, and Alasdair Urquhart.
\newblock An exponential separation between regular and general resolution.
\newblock {\em Theory of Computing}, 3(5):81--102, 2007.

\bibitem[BB12]{BonetBuss:poolVsRegularSAT}
Maria~Luisa Bonet and Samuel~R. Buss.
\newblock An improved separation of regular resolution from pool resolution and
  clause learning.
\newblock In {\em Proc. 15th International Conference on Theory and
  Applications of Satisfiability Testing -- SAT 2012}, Lecture Notes in
  Computer Science \#7317, pages 45--57, 2012.

\bibitem[BBJ14]{BBJ:poolVsRegular2}
Maria~Luisa Bonet, Samuel~R. Buss, and Jan Johannsen.
\newblock Improved separations of regular resolution from pool resolution and
  clause learning.
\newblock {\em Journal of Artificial Intelligence Research}, 49:669--703, 2014.

\bibitem[BHJ08]{BHJ:ResTreeLearning}
Samuel~R. Buss, Jan Hoffmann, and Jan Johannsen.
\newblock Resolution trees with lemmas: Resolution refinements that
  characterize {DLL}-algorithms with clause learning.
\newblock {\em Logical Methods in Computer Science}, 4, 4:13(4:13):1--18, 2008.

\bibitem[BKS04]{BKS:clauselearning}
Paul Beame, Henry~A. Kautz, and Ashish Sabharwal.
\newblock Towards understanding and harnessing the potential of clause
  learning.
\newblock {\em Journal of Artificial Intelligence Research}, 22:319--351, 2004.

\bibitem[DLL62]{DLL:theoremproving}
Martin Davis, George Logemann, and Donald Loveland.
\newblock A machine program for theorem proving.
\newblock {\em Communications of the ACM}, 5(7):394--397, 1962.

\bibitem[DP60]{DavisPutnam:procedure}
Martin Davis and Hillary Putnam.
\newblock A computing procedure for quantification theory.
\newblock {\em Journal of the Association for Computing Machinery},
  7(3):201--215, 1960.

\bibitem[HBPVG08]{HBPvG:clauselearn}
Philipp Hertel, Fahim Bacchus, Toniann Pitassi, and Allen Van~Gelder.
\newblock Clause learning can effectively p-simulate general propositional
  resolution.
\newblock In {\em Proc. 23rd AAAI Conf. on Artificial Intelligence (AAAI
  2008)}, pages 283--290. AAAI Press, 2008.

\bibitem[MSS99]{SilvaSakallah:GRASP}
Jo{\~a}o~P. Marques-Silva and Karem~A. Sakallah.
\newblock {GRASP} --- {A} new search algorithm for satisfiability.
\newblock {\em IEEE Transactions on Computers}, 48(5):506--521, 1999.

\bibitem[Urq11]{Urquhart:regularresolution}
Alasdair Urquhart.
\newblock A near-optimal separation of regular and general resolution.
\newblock {\em SIAM Journal on Computing}, 40(1):107--121, 2011.

\bibitem[VG05]{VanGelder:PoolResolution}
Allen Van~Gelder.
\newblock Pool resolution and its relation to regular resolution and {DPLL}
  with clause learning.
\newblock In {\em Logic for Programming, Artificial Intelligence, and Reasoning
  (LPAR 2005)}, Lecture Notes in Computer Science 3835, pages 580--594.
  Springer-Verlag, 2005.

\end{thebibliography}

\end{document}